\title{~\vspace{-20pt}\\An Adaptive and Fast Convergent Approach to \\ Differentially Private
Deep Learning\vspace{-10pt}}
\author{Zhiying Xu$^{1}$, Shuyu Shi$^1$, Alex X. Liu$^2$, Jun Zhao$^3$, Lin Chen$^{4}$
\vspace{.8mm}
\\
\fontsize{9}{9}\selectfont\itshape
$^1$University of Nanjing, China;~~~$^2$Michigan State University, USA\\
$^3$Nanyang Technological University, Singapore;~~~$^4$Yale University, USA \vspace{.8mm}
\\
\fontsize{9}{9}\selectfont\ttfamily\upshape
$^1$\{zyxu@smail. ssy@\}nju.edu.cn, $^2$alexliu@cse.msu.edu, $^3$junzhao@ntu.edu.sg, $^4$abratdarcy@gmail.com\vspace{-15pt}
}
\newcommand{\lc}[1]{}
\newcommand{\xu}[1]{{#1}}
\newcommand{\Var}{\mathrm{Var}}
\newtheorem{theorem}{Theorem}
\newtheorem{lemma}{Lemma}
\newcommand{\ie}{\emph{i.e.}}
\newcommand{\eg}{\emph{e.g.}}
\newcommand{\Alg}{\textsc{AdaDp}\xspace}
\newcommand{\AlgDpSgd}{\textsc{DpSgd}\xspace}
\newcommand{\expect}{\mathbb{E}}
\theoremstyle{definition}
\newtheorem{definition}{Definition}
\DeclarePairedDelimiter\ceil{\lceil}{\rceil}
\DeclarePairedDelimiter\floor{\lfloor}{\rfloor}
\def\BibTeX{{\rm B\kern-.05em{\sc i\kern-.025em b}\kern-.08em
		T\kern-.1667em\lower.7ex\hbox{E}\kern-.125emX}}
\begin{document}

\maketitle
\pagestyle{fancy}
\thispagestyle{fancy} \lhead{This full paper appears in the Proceedings of IEEE International Conference on Computer Communications (INFOCOM), held in April 2020.}  
\cfoot{\thepage}
\renewcommand{\headrulewidth}{0.4pt}
\renewcommand{\footrulewidth}{0pt}



\begin{abstract}
With the advent of the era of big data, deep learning has become a prevalent building block in a variety of machine learning or data mining tasks, such as 
\xu{signal processing, network modeling and traffic analysis}, to name a few. 
The massive user data crowdsourced plays a crucial role in the success of deep learning models.
However, 
it has been shown 
that user data may be inferred from trained neural models and thereby exposed to potential adversaries, which raises 
information security and privacy concerns.
To address this issue, recent studies leverage the technique of differential privacy to design
private-preserving deep learning algorithms.
Albeit successful at privacy protection,  differential privacy degrades the performance of neural models.
In this paper, we develop \Alg, an adaptive and fast convergent learning algorithm with a provable privacy guarantee.
\Alg significantly reduces the privacy cost by improving the convergence speed with an adaptive learning rate and mitigates the negative effect of  differential privacy upon the model accuracy by introducing adaptive noise. 
The performance of \Alg is evaluated on real-world datasets. Experiment results show that it outperforms state-of-the-art differentially private approaches in terms of both privacy cost and model accuracy.

\end{abstract}

\begin{IEEEkeywords}
	crowdsourcing, information security and privacy, differential privacy, deep learning, adaptive, fast convergent
\end{IEEEkeywords}

\section{Introduction}
\subsection{Background and Motivation}
The past decade has witnessed the remarkable success of deep learning techniques in various machine learning\,\slash\,data mining tasks, such as 
signal processing \cite{wang2016csi}, network modeling \cite{wang2017spatiotemporal} and traffic analysis \cite{zhou2018deep}.
The great success relies heavily on the massive collection of user data, which, however, often raise severe privacy and security issues. 
For example, Fredrikson et al. \cite{fredrikson2014privacy}, demonstrates that the individual privacy information
in
the training dataset can be recovered by repeatedly querying the output
probabilities of a disease recognition classifier built upon a convolutional
neural
network (CNN).
Existing privacy concerns are likely to discourage users from sharing their data and thereby obstruct the future development of deep learning itself.

This paper studies the problem of user privacy protection in the training process of neural models. We consider the white-box scenario where an adversary has access to the parameters of a trained model. In this scenario, many service providers allow users to download models to their personal devices (e.g., computers and smart phones), and malicious users could analyze the parameters of the model which may expose personal information in the training dataset. 
%
\subsection{Limitations of Prior Art}
To address the privacy issue, several \emph{differential privacy} (DP)~\cite{dwork2006calibrating} based approaches were proposed, which may be classified into two categories: data obfuscation and gradient obfuscation. Data obfuscation based approaches obfuscate data with noise prior to potential exposure of sensitive information \cite{bost2015machine, zhang2018privacy}. These approaches may suffer from significant accuracy degradation of the trained model. 
The reason is that to guarantee the differential privacy bound, the added noise may be excessively intense and make differently labeled training instances almost indistinguishable.
In contrast to data obfuscation, gradient obfuscation based approaches add noise to the gradient in the training process \cite{kairouz2015secure, shokri2015privacy, abadi2016deep, lee2018concentrated, koskela2018learning, xiang2019differentially}. However, they may not circumvent the accuracy degradation issue completely. 
%
\xu{Although some methods aim to improve the accuracy of gradient obfuscation \cite{lee2018concentrated, xiang2019differentially}, they have three key limitations. First, the privacy cost is high because the convergence speed of these methods is slow while the privacy cost is accumulated for each gradient calculation. Second, the accuracy still cannot meet the high-precision requirements of many applications since they add identically distributed noise to all components of the gradient which results in large distortion of the original gradient.
Third, these methods are computationally inefficient because they need to evaluate the model multiple times or solve a large-scale optimization problem  per iteration which make the task computationally prohibitive.}

\subsection{Proposed Approach}
In this paper, we propose \Alg, an adaptive and fast convergent approach
to differentially private deep learning.
Our key observation is that different components of the gradient have inhomogeneous sensitivity to the training data. In light of this observation, \Alg mitigates the influence of noise on the model performance by adaptively sampling noise from different Gaussian distributions, based on the sensitivity of each component. In the first stage, \Alg adjusts the learning rate in an adaptive manner based on historical gradients such that infrequently updated components tend to have a larger learning rate. Then, \Alg samples
noise from different Gaussian distributions according to the
sensitivity of each gradient component and constructs differentially private gradients by adding the sensitivity-dependent noise to the  original gradient. In this way, Gaussian noise with a lower variance is added to components with a smaller sensitivity.

Compared to existing data obfuscation and gradient obfuscation based methods, \Alg has three key advantages. 
 
First, \xu{the privacy cost of \Alg is low because it} exhibits remarkable improvement in the convergence speed due to an adaptive learning rate. Since the privacy cost is accumulated on each gradient update, a faster rate of convergence indicates that training a model using \Alg incurs lower privacy cost.

Second, \Alg achieves both a provable privacy guarantee and a comparable accuracy to non-differentially private models simultaneously. We will show later that this is attained by adding adaptive noise to different gradient components, depending on their sensitivity. As the model converges, we will see a decrease in the expected sensitivity of each gradient component, which thereby reduces the variance of noise distribution. In other words, in contrast to prior works, the noise distribution of \Alg is adaptive to not only different gradient components but also different training iterations. 

Third, \Alg is computationally efficient since it does not need to solve any optimization problem to determine the noise distribution at each iteration, in sharp contrast to \cite{xiang2019differentially} that requires solving a large-scale non-convex optimization problems. We design an efficient scheme for adjusting the noise distribution and the scheme is evaluated via both theoretical analysis and numerical experiments.

\subsection{Technical Challenges and Solutions}

First, it is technically challenging to mathematically analyze the influence of the noise distribution upon the prediction performance, in light of the complicated nature of deep neural networks.
 As an alternative, we analyze the sufficient and necessary condition for noise distributions to guarantee the target differential privacy level.
 The condition involves an inequality with respect to the sensitivity of each gradient component and the variance of each corresponding Gaussian distribution. 
 Based on the analysis, we conduct an experiment to compare the influence of different noise distributions upon the original \xu{function which can be seen as a query on a dataset}.\lc{what's the original query function?}
 According to the theoretical and experiment results, we propose a heuristic that adapts the noise distributions to the sensitivity of each gradient component. Finally, we perform another experiment to verify the effectiveness of this heuristic according to the influence of noise on the gradient descent algorithm \xu{which is widely used for optimizing deep learning models}. \lc{why gradient descent here? without momentum?}

Another technical challenge is to compute the privacy cost without any assumptions on the parameters such as the noise level and the sampling ratio.
This is in sharp contrast to prior methods.
For example, the moments accountant method requires the noise level $\sigma \geq 1$ and the sampling ratio $q < \frac{1}{16\sigma}$ \cite{abadi2016deep}. 
To remove the assumptions,
we use a technique termed subsampled \emph{R\'enyi differential privacy} (RDP) \cite{mironov2017renyi}, which computes the privacy cost by analyzing the privacy amplification in the subsampling scenario. And importantly, it requires no assumption on the parameters in the analysis \cite{wang2018subsampled}.\lc{I skip this paragraph}

\subsection{Summary of Experiment Results}
We evaluated the privacy cost, the accuracy and the computational efficiency of \Alg and baselines on two real datasets: MNIST \cite{lecun1998gradient} and CIFAR-10 \cite{krizhevsky2009learning}. The results show that \xu{\Alg outperforms state-of-the-art methods with 50\% privacy cost reduction}. On MNIST and CIFAR-10, \Alg achieves an improvement of up to 4.3\% and 3.5\%, respectively, in accuracy over state-of-the-art methods. \xu{Our experimental results also show that the processing time of \Alg at each iteration is much less than that of \cite{xiang2019differentially, lee2018concentrated}, validating the computational efficiency of \Alg}.

\section{Related Work}
To protect sensitive information in crowdsourced data collection, 
differentially private crowdsourcing mechanisms were designed~\cite{zhang2018crowdbuy, jin2019if, niu2019making, li2019pedss}.
To preclude personal information from being inferred and/or identified from neural models~\cite{fredrikson2014privacy}, a line of works emerged which applied DP to deep learning~\cite{kairouz2015secure, shokri2015privacy, phan2016differential,abadi2016deep, papadimitriou2017dstress,papernot2016semi, phan2017adaptive, papernot2018scalable, lee2018concentrated,  koskela2018learning, collet2018boosting, wang2018geographic, xiang2019differentially}.
For instance, \cite{papernot2016semi, papernot2018scalable} integrate DP into a teacher-student framework to protect data on student nodes. \cite{shokri2015privacy, kairouz2015secure, mao2018privacy, collet2018boosting} study transferring features or gradients with privacy control in collaborative deep learning. \cite{zhang2018crowdbuy, jin2019if, niu2019making, li2019pedss} apply DP to personal data collection. In the above works, black-box attacks are (implicitly) assumed, \ie, the learned model is inaccessible to adversaries. Other works studied privacy protection in a more realistic white-box model where adversaries may have full knowledge of the  model~\cite{abadi2016deep,phan2017adaptive,lee2018concentrated,koskela2018learning,xiang2019differentially}.
\cite{abadi2016deep} proposes a differentially private gradient descent
algorithm
\AlgDpSgd by adding Gaussian noise to the gradient.
\cite{koskela2018learning} uses an adaptive learning rate to improve the convergence rate and reduce the privacy cost.
\cite{phan2017adaptive} introduces the Laplace mechanism such that the privacy budget consumption is independent of the number of training steps. \cite{lee2018concentrated} allocates different privacy
budgets to each training iteration to counteract the influence of noise on the gradient.  In all aforementioned works, however, the noise on each gradient component follows the \emph{same}
probability distribution. As a result, the original gradient is distorted to a
large extent. Although \cite{xiang2019differentially} samples noise from different distributions for each gradient component, solving a large-scale optimization is required at every step.

\section{Our Approach} \label{sec:approach}

To reduce the privacy cost, \Alg uses
 an adaptive learning rate for acceleration of the convergence.
 Additionally, \Alg adds inhomogeneous and adaptive noise to different coordinates of the gradient based on their  sensitivity in order to mitigate the influence of noise on the model performance.
The next two subsections elaborate the adaptive learning rate and noise respectively. Then we present \Alg and show its differential privacy guarantee.

\subsection{Adaptive Learning Rate}\label{sub:adaptive_learning_rate}
The most popular method for training deep models is the gradient-descent-type algorithms.
They iteratively update the parameters of a model
by moving them in the direction opposite to the gradient of the loss function
evaluated on the training data. The loss function
 $\mathscr{\mathcal{L}}$ is the difference
between the predictions and the true labels. To minimize the loss $\mathscr{\mathcal{L}}(\theta)$,
stochastic gradient descent (\textsc{Sgd}) randomly chooses a subset
of training data (denoted by $S$) at each iteration and performs the update
 $\theta\gets\theta-\eta\frac{1}{|S|}\sum_{i\in S}\nabla_{\theta}\mathcal{L}(\theta,\,x_{i})$.
\textsc{Sgd} poses several
challenges, \eg, selection of a proper learning rate
and avoidance of local minimum traps.

More advanced optimizers, including \textsc{RMSProp}, \textsc{Adam}, \textsc{Adadelta} and
\textsc{Nadam}, are proposed to address the above issues. They
adjust
the learning rate on a per-parameter basis in an adaptive manner and scale the coordinates
of the gradient according to historical data.

\Alg uses an adaptive strategy similar to that of \textsc{RMSProp}. Nevertheless, we would like to note that the framework proposed in this paper is applicable to other adaptive gradient-descent-type algorithms.
Recall the update of \textsc{RMSProp}
\begin{align}
	\label{eq:rmsprop}
	\expect[g^2]_t &\gets (1-\gamma)\expect[g^2]_{t-1} + \gamma (g_t)^2 \\ \nonumber
	\theta_{t} &\gets \theta_{t-1} - \eta\frac{g_t}{\sqrt{\expect[g^2]_{t} + \epsilon_0}},
\end{align}
where ${\theta_{t}}$ denotes the parameters at step ${t}$, ${g_t}$ denotes the original gradient,  ${\eta}$ is the learning rate, and ${\epsilon_0}$ is the smoothing term (in case that the denominator is 0).

In \Alg, let $\Delta\theta_{t}$ denote the update term at step $t$ and we have $\Delta\theta_{t}= \frac{\tilde{g}_t}{\sqrt{\expect[\tilde{g}^2]_{t} + \epsilon_0}}$ where $\tilde{g}_t$ is the noisy gradient which has been added Gaussian noise. In other words, \Alg uses the denominator $\sqrt{\expect[\tilde{g}^2]_{t} + \epsilon_0}$ to adjust the learning rate $\eta$ in an adaptive manner.

\subsection{Adaptive Noise}\label{sub:adaptive_noise}
The intuition of adaptive noise is that different coordinates of the gradient exhibit inhomogeneous sensitivities due to their different values.
It significantly affects the direction of the  gradient if noise with higher intensity is added to coordinates with a smaller value, and \emph{vice versa}.
In light of this intuition,
\Alg clips the  gradient
and adds Gaussian noise with a smaller/larger variance to dimensions of the clipped gradient with a smaller/larger sensitivity. 

Formally, we define the $\ell^2$-sensitivity of a function $f$ as $\Delta_f = \max_{D, D^\prime} \| f(D) - f(D^\prime) \|_2$ where $D, D^\prime$ are two datasets which differ in only one record. Correspondingly, adaptive noise is sampled from Gaussian distributions with different variances based on the $\ell^2$-sensitivity of each dimension. In the meanwhile, adaptive noise must satisfy the constraint of differential privacy. To guarantee this, we first present the following lemma which provides the theoretical foundation of adaptive noise.

\begin{lemma}
	\label{fst_lemma}
	Suppose mechanism $M(D) = f(D) + Z$, where $D$ is the input dataset, $f(\cdot)$ is a $m$-dimension function that $\Delta_f \leq 1$ and noise $Z \sim \mathcal{N}(0, \sigma_*^2)$, is $(\epsilon, \delta)$-differentially private, then mechanism $M^\prime(D) = f^\prime(D) + Z^\prime$, where $f^\prime(\cdot)$ is also a $m$-dimension function, $Z^\prime = (z^\prime_1, \dots, z^\prime_m)^T$ and $\forall i\in [m],\, z^\prime_i \sim \mathcal{N}(0, \sigma_i^2)$, is also $(\epsilon, \delta)$-differentially private if $\sum_{i=1}^{m}\frac{s_i^2}{\sigma_i^2} \leq \frac{1}{\sigma_*^2}$ where $s_i$ is the $\ell^2$-sensitivity of the $i$\textsuperscript{th} dimension of $f^\prime(\cdot)$.
\end{lemma}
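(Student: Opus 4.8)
The plan is to reduce the anisotropic mechanism $M'$ to the isotropic reference mechanism $M$ by a fixed, data-independent rescaling of the coordinates, and then invoke the post-processing invariance of differential privacy. The key observation is that dividing each coordinate $i$ by its standard deviation $\sigma_i$ turns the heterogeneous noise $Z'$ into a spherical Gaussian, while the hypothesis $\sum_i s_i^2/\sigma_i^2 \le 1/\sigma_*^2$ is exactly what guarantees that the rescaled query still has $\ell^2$-sensitivity at most $1$. This makes the rescaled mechanism match the hypothesis on $M$ verbatim.

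Concretely, I would introduce the diagonal linear map $A = \mathrm{diag}(\sigma_*/\sigma_1, \dots, \sigma_*/\sigma_m)$ and consider $N(D) := A\, M'(D) = A f'(D) + A Z'$. First I would check that $A Z' \sim \mathcal{N}(0, \sigma_*^2 I)$: coordinate $i$ is $(\sigma_*/\sigma_i) z_i'$, whose variance is $(\sigma_*/\sigma_i)^2 \sigma_i^2 = \sigma_*^2$, so the noise becomes isotropic with exactly the reference variance $\sigma_*^2$. Next I would bound the $\ell^2$-sensitivity of $g := A f'$: for neighbouring $D, D'$, and using that $|f'_i(D) - f'_i(D')| \le s_i$ by definition of the per-coordinate sensitivity,
\begin{align}
  \|g(D) - g(D')\|_2^2
    &= \sum_{i=1}^m \frac{\sigma_*^2}{\sigma_i^2}\bigl(f'_i(D) - f'_i(D')\bigr)^2 \nonumber \\
    &\le \sigma_*^2 \sum_{i=1}^m \frac{s_i^2}{\sigma_i^2} \le 1, \nonumber
\end{align}
so $\Delta_g \le 1$.

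Hence $N(D) = g(D) + \mathcal{N}(0, \sigma_*^2 I)$ has exactly the form assumed of $M$ --- an $m$-dimensional query of $\ell^2$-sensitivity at most $1$ perturbed by $\mathcal{N}(0,\sigma_*^2)$ noise --- so the hypothesis gives that $N$ is $(\epsilon,\delta)$-differentially private. Finally, since $A$ is invertible and data-independent, $M'(D) = A^{-1} N(D)$ is a post-processing of $N$; by the post-processing property of differential privacy, $M'$ inherits the $(\epsilon,\delta)$ guarantee, which completes the proof.

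I expect the main conceptual step to be identifying the correct rescaling $A$ and recognizing that the hypothesis on $M$ must be read as applying to \emph{every} $m$-dimensional query of sensitivity at most $1$, so that it may legitimately be instantiated at $g$ rather than only at the original $f$. The remaining points requiring care are routine: that the per-coordinate sensitivity $s_i$ indeed upper-bounds $|f'_i(D)-f'_i(D')|$, and that post-processing is applied in the correct direction (recovering $M'$ from $N$ via $A^{-1}$, not the reverse). An alternative route would avoid post-processing and argue directly that the $(\epsilon,\delta)$ profile of a diagonal Gaussian mechanism depends only on the worst-case Mahalanobis distance $\sqrt{\sum_i (f_i(D)-f_i(D'))^2/\sigma_i^2}$, is monotone in it, and that this distance is at most $1/\sigma_*$ for $M'$; but the rescaling argument is cleaner and reuses the hypothesis directly.
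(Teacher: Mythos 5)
Your proof is correct, but it follows a genuinely different route from the paper's. The paper argues through the privacy loss random variable: it invokes the analytical characterization of $(\epsilon,\delta)$-DP (its Lemma~2, due to Balle and Wang), computes that for the anisotropic mechanism the loss variable is $\mathcal{N}(H/2,H)$-distributed with $H=\sum_{i=1}^m s_i^2/\sigma_i^2$, shows by an explicit derivative calculation that the left-hand side of the analytical condition is monotonically increasing in $H$, and concludes by comparing $H \le H_* = 1/\sigma_*^2$, the value attained by the reference mechanism. Your diagonal rescaling $A=\mathrm{diag}(\sigma_*/\sigma_1,\dots,\sigma_*/\sigma_m)$ followed by post-processing with $A^{-1}$ bypasses all of that machinery --- no analytical DP condition, no loss-distribution computation, no calculus --- and post-processing is a tool the paper itself uses later (in the proof of its Theorem~1), so nothing foreign is imported. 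What the paper's heavier route buys is the monotonicity-in-$H$ fact itself: it shows the condition $\sum_i s_i^2/\sigma_i^2\le 1/\sigma_*^2$ is not merely sufficient but tight in the worst case, the ``sufficient and necessary'' character the authors emphasize when motivating their noise-allocation heuristic; your reduction delivers sufficiency only, which is all the lemma asserts. Note also that both proofs rest on the same reading of the hypothesis, namely that noise level $\sigma_*$ gives $(\epsilon,\delta)$-DP for every (equivalently, a worst-case) $m$-dimensional query of sensitivity at most $1$, not merely for one particular $f$ --- under the literal one-$f$ reading (e.g., $f$ constant) the lemma would be false. The paper makes this move silently by setting $H_*=1/\sigma_*^2$; you flag it explicitly, which is the more careful treatment.
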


To prove \cref{fst_lemma}, we need an auxiliary result which involves the sufficient and necessary condition on the privacy loss variable to satisfy $(\epsilon, \delta)$-DP.
\begin{lemma} [Analytical differential privacy \cite{balle2018improving}]
	\label{ana_theorem}
	A mechanism $M$ is $(\epsilon, \delta)$-DP if and only if for each $D, D^\prime$, the following holds:
	\begin{align}
	\label{eq:ana_gaus}
		\Pr\left( l_{M, D, D^\prime} \geq \epsilon  \right) - e^\epsilon\Pr\left(l_{M, D^\prime, D} \leq -\epsilon \right) < \delta,
	\end{align}
	where $l_{M, D, D^\prime}$ is the privacy loss variable defined by $\ln{\frac{\Pr\left(M(D)=o\right)}{\Pr\left(M(D^\prime)=o\right)}}$.
\end{lemma}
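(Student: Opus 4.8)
The plan is to reduce the definition of $(\epsilon,\delta)$-DP to an optimization over output events and then recognize the optimizer in closed form. Recall that $M$ is $(\epsilon,\delta)$-DP exactly when, for every pair of neighboring datasets $D, D'$ and every measurable output set $S$, one has $\Pr(M(D) \in S) - e^\epsilon \Pr(M(D') \in S) \le \delta$. Equivalently, writing $H(D,D') := \sup_{S}\bigl[\Pr(M(D)\in S) - e^\epsilon \Pr(M(D')\in S)\bigr]$ (the hockey-stick divergence of order $e^\epsilon$), the property holds iff $H(D,D') \le \delta$ for all neighboring $D, D'$. So it suffices to evaluate this supremum explicitly and match it to the left-hand side of \eqref{eq:ana_gaus}; both directions of the ``if and only if'' then fall out simultaneously.

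First I would identify the supremum-attaining set. Letting $p_D, p_{D'}$ denote densities of $M(D), M(D')$ with respect to a common dominating measure, the objective decomposes as $\int_S \bigl(p_D(o) - e^\epsilon p_{D'}(o)\bigr)\,do$. By a pointwise Neyman--Pearson argument, this is maximized by including exactly the outputs at which the integrand is nonnegative, so the supremum is attained at
\[
	S^\star = \{\,o : p_D(o) \ge e^\epsilon p_{D'}(o)\,\} = \{\,o : l_{M,D,D'}(o) \ge \epsilon\,\}.
\]
I would then translate both resulting probabilities into events of the privacy loss variable. The first term is immediate: $\int_{S^\star} p_D = \Pr(l_{M,D,D'} \ge \epsilon)$ with $o \sim M(D)$. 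For the second, I would invoke the pointwise symmetry identity $l_{M,D,D'}(o) = -\,l_{M,D',D}(o)$, so that $o \in S^\star$ iff $l_{M,D',D}(o) \le -\epsilon$ as subsets of the output space; integrating $p_{D'}$ over $S^\star$ therefore gives $\Pr(l_{M,D',D} \le -\epsilon)$ with $o \sim M(D')$. Substituting yields
\[
	H(D,D') = \Pr(l_{M,D,D'} \ge \epsilon) - e^\epsilon \Pr(l_{M,D',D} \le -\epsilon),
\]
and since $(\epsilon,\delta)$-DP is equivalent to $H(D,D') \le \delta$ for all $D,D'$, this is precisely the condition \eqref{eq:ana_gaus}.

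The main obstacle I anticipate is the measure-theoretic bookkeeping at the threshold and in the non-dominated case rather than any hard estimate. Concretely I must (i) fix the open/closed convention at $l = \epsilon$ so that thresholding produces exactly the events $\{\ge \epsilon\}$ and $\{\le -\epsilon\}$, and reconcile the strict ``$<\delta$'' written in \eqref{eq:ana_gaus} with the (typically non-strict) definition of DP, a boundary convention that must be stated carefully; and (ii) handle outputs where $p_{D'}(o)=0$ but $p_D(o)>0$, at which $l_{M,D,D'} = +\infty$, so that such points lie in $S^\star$ and contribute to $\Pr(l_{M,D,D'}\ge\epsilon)$ while contributing nothing to the $e^\epsilon\Pr(\cdot)$ term, which is consistent. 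Verifying that the pointwise maximizer is genuinely measurable and that the supremum over all $S$ is attained rather than merely approached is the one genuinely delicate point; the remaining steps are a direct substitution.
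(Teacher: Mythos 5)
The paper never proves this lemma at all: it is imported verbatim from Balle and Wang \cite{balle2018improving} as an auxiliary tool for proving \cref{fst_lemma}, so there is no in-paper proof to compare against. Your argument is, in substance, the proof from that cited source: the reduction of $(\epsilon,\delta)$-DP to the hockey-stick divergence $\sup_{S}\bigl[\Pr(M(D)\in S) - e^\epsilon \Pr(M(D')\in S)\bigr] \leq \delta$, the Neyman--Pearson identification of the maximizing set $S^\star = \{o : l_{M,D,D'}(o) \geq \epsilon\}$, and the sign-flip identity $l_{M,D,D'}(o) = -\,l_{M,D',D}(o)$ are exactly the ingredients of the Balle--Wang derivation, and each step you give is sound (including your handling of outputs where the density ratio is infinite). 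One remark on the boundary issue you flag: the strict inequality ``$<\delta$'' in the statement as reproduced here is a transcription slip by the paper's authors --- the correct statement uses ``$\leq\delta$'' --- and with the strict version the ``only if'' direction genuinely fails when the supremum equals $\delta$ exactly, so the right resolution is to read the lemma with a non-strict inequality rather than to try to prove it as written.
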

We are now ready to present the proof of  \cref{fst_lemma}. 
\begin{proof}
	The first step is to show that $l_{M, D, D^\prime}$ is a Gaussian random variable. Let $(r_1, \dots, r_m) = o - f(D)$. We consider the worst case of $l_{M, D, D^\prime}$. In this case, we have $(s_1, \dots, s_m) = f(D) - f(D^\prime)$, which yields $(r_1+s_1, \dots, r_m + s_m) = o - f(D^\prime)$. As a result, the following equations hold
	\begin{align}
		\ln\frac{\Pr(M(D)=o)}{\Pr(M(D^\prime)=o)}
		&= \ln  \frac{\prod_{j}^{m}\left(\frac{1}{\sqrt{2\pi\sigma_j^2}}\exp\left(\frac{-r_j^2}{2\sigma_j^2}\right)\right)}{\prod_{i}^{m}\left(\frac{1}{\sqrt{2\pi\sigma_i^2}}\exp\left(\frac{-(r_i + s_i)^2}{2\sigma_j^2}\right)\right)}  \nonumber  \\
		&=\sum_{j}^{m}\left(  \frac{-r_j^2}{2\sigma_j^2}  \right) - \sum_{j}^{m}\left( \frac{-(r_j + s_j) ^ 2}{2\sigma_j^2}  \right) \nonumber \\
		&= \sum_{j}^{m}\left(\frac{s_j^2}{2\sigma_j^2}  \right) + \sum_{j}^{m}\left( \frac{r_js_j}{\sigma_j^2}  \right).
	\end{align}
	
	In light of $r_j \sim \mathcal{N}(0, \sigma_j^2)$, we obtain that  $l_{M, D, D^\prime} =  \sum_{j}^{m}\frac{s_j^2}{2\sigma_j^2}   + \sum_{j}^{m} \frac{r_js_j}{\sigma_j^2}$ also obeys a Gaussian distribution. Specifically, we have $l_{M, D, D^\prime} \sim \mathcal{N}(\sum_{j}^{m}\frac{s_j^2}{2\sigma_j^2}, \sum_{j}^{m}\frac{s_j^2}{\sigma_j^2})$. If we let $H$ denote $\sum_{j}^{m}\frac{s_j^2}{\sigma_j^2}$, it can be re-written as $l_{M, D, D^\prime} \sim\mathcal{N}(\frac{H}{2}, H)$.
	
	The second step is to show that $\Pr\left( l_{M, D, D^\prime} \geq \epsilon  \right) - e^\epsilon \Pr\left( l_{M, D^\prime, D} \leq -\epsilon \right)$ is monotonically increasing in $H$. Let us compute the first term
	\begin{align*}
	\Pr\left(l_{M, D, D^\prime} \geq \epsilon \right) = \frac{1}{\sqrt{2\pi}}\int_{-\infty}^{A(H)}e^{-x^2/2}dx  
	\end{align*}where $A(H) = \frac{\sqrt{H}}{2} - \frac{\epsilon}{\sqrt{H}}$.
	Similarly, the second term can be re-written as
	\begin{align*}
	\Pr\left(l_{M, D^\prime, D} \leq -\epsilon\right) = \frac{1}{\sqrt{2\pi}}\int_{-\infty}^{A^\prime(H)}e^{-x^2/2}dx
	\end{align*}where $A^\prime(H) = \frac{-\epsilon}{\sqrt{H}} - \frac{\sqrt{H}}{2}$.
	
	Then the derivative of $\Pr\left(l_{M, D, D^\prime} \geq \epsilon  \right) - e^\epsilon \Pr\left( l_{M, D^\prime, D} \leq -\epsilon \right)$ about $H$ is:
	\begin{align*}
	&\frac{d\Pr\left( l_{M, D, D^\prime} \geq \epsilon  \right) - e^\epsilon \Pr\left( l_{M, D^\prime, D} \leq -\epsilon \right) }{dH} \\ \nonumber
	&= \frac{1}{\sqrt{2\pi}} \Big(e^{\frac{-(A(H))^2 }{2}} \cdot \frac{dA(H)}{dH} - e^{\epsilon}e^{-\frac{(A^\prime(H))^2}{2}}\cdot\frac{dA^\prime(H)}{dH}\Big) \\ \nonumber
	&=\frac{1}{\sqrt{2\pi}} \Big(e^{\frac{-(A(H))^2 }{2}}\big(\frac{1}{4\sqrt{H}} + \frac{\epsilon}{2}H^{3/2}\big) \\ \nonumber
	& \quad \quad \quad - e^\epsilon e^{\frac{-(A^\prime(H))^2}{2}} \big(\frac{-1}{4\sqrt{H}} + \frac{\epsilon}{2}H^{3/2}\big)\Big) \\ \nonumber
	&=\frac{1}{\sqrt{2\pi}} \Big(\frac{1}{4\sqrt{H}} \big(e^{\frac{-(A(H))^2 }{2}} + e^\epsilon e^{\frac{-(A^\prime(H))^2}{2}}\big) \\ \nonumber
	& \quad \quad \quad + \frac{\epsilon}{2}H^{3/2}\big(e^{\frac{-(A(H))^2 }{2}} - e^{\epsilon - \frac{(A^\prime(H))^2}{2}}\big) \Big).
	\end{align*}
	Since $(A^\prime(H))^2 = (A(H))^2 + 2\epsilon$, then we have 
	\begin{align*}
	&\frac{d\Pr\left( l_{M, D, D^\prime} \geq \epsilon  \right) - e^\epsilon \Pr\left( l_{M, D^\prime, D} \leq -\epsilon \right) }{dH} \\\nonumber
	&= \frac{1}{\sqrt{2\pi}} \left(\frac{1}{4\sqrt{H}} (e^{-\frac{(A(H))^2 }{2}} + e^\epsilon e^{-\frac{(A^\prime(H))^2}{2}})\right) \\ \nonumber
	&\geq 0.
	\end{align*}
	Therefore, $\Pr\left( l_{M, D, D^\prime} \geq \epsilon  \right) - e^\epsilon \Pr\left( l_{M, D^\prime, D} \leq -\epsilon \right)$ is monotonically increasing with $H$.
	
	Then if a function $f(\cdot)$ that $\Delta_f \leq 1$ satisfies $(\epsilon, \delta)$-DP with $\sigma_*$, the privacy loss variable $l^*_{M, D, D^\prime}$ must satisfy \cref{eq:ana_gaus}. Note that $l^*_{M, D, D^\prime}$ is a Gaussian variable that $l^*_{M, D, D^\prime} \sim\mathcal{N}(\frac{H_*}{2}, H_*)$ where $H_* = \sum_{j}^{m}\frac{(s^*_j)^2}{\sigma_*^2} = \frac{1}{\sigma_*^2}$. Since the left part of \cref{eq:ana_gaus} is monotonically increasing with $H$, \cref{eq:ana_gaus} will hold for $H^\prime$ if $H^\prime\leq H_*$, namely $\sum_{j}^{m}\frac{s_j^2}{\sigma_j^2} \leq \frac{1}{\sigma_*^2}$.
\end{proof}

\cref{fst_lemma} shows the condition of differential privacy when we add noise from Gaussian distributions with different variances to different dimensions of a query function. For example, suppose we have a $2$-dimension query function $f^\prime(\cdot)$, and $s_1 = 12.0, s_2 = 6.0$. If mechanism $M$ satisfies $(\epsilon, \delta)$-DP with $\sigma_* = 1.0$, then mechanism $M^\prime$ satisfies the same $(\epsilon, \delta)$-DP with $\sigma_1 = 17.0$ and $\sigma_2 = 8.5$ since $\frac{12.0 ^ 2}{17.0 ^ 2} + \frac{6.0 ^ 2}{8.5 ^ 2} \leq \frac{1}{1.0 ^ 2}$. Namely, we sample noise with standard deviation $17.0$ for the first dimension and $8.5$ for the second dimension of $f^\prime(\cdot)$. In contrast, previous methods sample noise from the same Gaussian distribution $\mathcal{N}(0, 13.5^2)$ for each dimension of $f^\prime(\cdot)$ since $\frac{12.0 ^ 2}{13.5 ^ 2} + \frac{6.0 ^ 2}{13.5 ^ 2} \leq \frac{1}{1.0 ^ 2}$. 

Continue with the above example, we conducted a numerical experiment to compare the influence of different noise distributions on the original query function. Specifically, we calculate cosine similarity between the noisy result ($M^\prime(D)$) and the original result ($f^\prime(D)$). As a metric of such influence, the higher similarity two results have, the less the influence of noise on the original query function. Without loss of generality, we assume $f^\prime(D) = (10.0, 5.0)^T$ given that $s_1 = 12.0, s_2 = 6.0$. We then sample noise 10000 times and compute the average cosine similarity. When we set $\sigma_1 = 17.0$ and $\sigma_2 = 8.5$, the average cosine similarity between the noisy result and the original result is 0.52. However, when we set $\sigma_1 = \sigma_2 = 13.5$, the average is only 0.36. Another strategy is to set $\sigma_1 = 12.4$ and $\sigma_2 = 24.8$, the cosine similarity in this scenario is reduced to 0.28. Note that like the third strategy, the optimization techniques in \cite{xiang2019differentially} tend to sample noise with higher variance for dimensions with smaller sensitivity. In other words, this numeric experiment shows that coordinates of the query function with larger sensitivity can tolerate noise with higher variance.

In the scenario of training a deep learning model, we consider the gradient at each iteration as a query function of the training dataset. Specifically, given the gradient $g_t$ at the $t$\textsuperscript{th} iteration, we still use $s_i$ to denote the $\ell^2$-sensitivity of the $i$\textsuperscript{th} dimension of $g_t$. Then depending on \cref{fst_lemma}, a Gaussian mechanism $M_t(x) = g_t(x) + Z$, where $Z=(z_1, \dots, z_m)^T$ and $\forall i\in [m],\, z_i \sim \mathcal{N}(0, \sigma_i^2)$, satisfies $(\epsilon, \delta)$-DP if $\sum_{i=1}^{m}\frac{s_i^2}{\sigma_i^2} \leq \frac{1}{\sigma *^2}$ where $\epsilon, \delta, \sigma_*$ will be determined later. The remained problem is how to calculate $s_i$ and $\sigma_i$. Note that \textsc{RMSProp} uses the term $\expect[g^2]_{t-1}$ to estimate the average square of historical gradients. We observe that historical gradients can also be used to estimate the current gradient. For example, if ${\expect[g^2]_{t-1}^i} > {\expect[g^2]_{t-1}^j} $, then it is quite possible that $|g_t^i| > |g_t^j|$. In other words, $\expect[g^2]_{t-1}$ can be considered as a kind of prior knowledge about the value of $|g_t|$. To distinguish from the term used in the adaptive learning rate, we use $\expect^\prime[g^2]$ to denote this prior knowledge which can be computed as follows:
\begin{align}
	\label{ada_grads_est}
	\expect^\prime[g^2]_0 &= \vv{0}\nonumber\\
	\expect^\prime[g^2]_t &= \gamma^\prime \expect^\prime[g^2]_{t-1} + (1-\gamma^\prime) (g_t)^2
\end{align}

Then depending on the prior knowledge, $s_i$ can be calculated as $\beta \sqrt{\expect^\prime[g^2]^i_{t-1}}$ where $\beta > 0$ is a parameter. Since $s_i$ is the $\ell^2$-sensitivity of the $i$\textsuperscript{th} dimension, we clip each coordinate of the original gradient to guarantee this. Specifically, the $i$\textsuperscript{th} dimension of the clipped gradient (denoted by $\Bar{g}^i_t$) is calculated as:
\begin{align}
	\Bar{g}^i_t = \min\{\max\{g^i_t, {-s_i}\}, {s_i}\}.
\end{align} In contrast with previous methods which clip the gradient with a global $\ell^2$-norm upper bound $C$ such that $\|g_t\|_2 \leq C$, we call our method as \textit{local clipping} since it operates on each coordinate separately and call $\beta$ as \textit{local clipping factor}. Then considering $\sum_{i=1}^{m}\frac{s_i^2}{\sigma_i^2} \leq \frac{1}{\sigma *^2}$, we have $\sigma_i = \beta \sigma_* \sqrt{m\expect^\prime[g^2]_{t-1}^i}$ where $m$ is the number of dimensions. This heuristic means that for the $i$\textsuperscript{th} dimension of the gradient, the standard deviation of Gaussian noise ($\sigma_i$) is scale to the $\ell^2$-sensitivity of the $i$\textsuperscript{th} dimension ($\beta\sqrt{\expect^\prime[g^2]_{t-1}^i}$). On the other hand, as the training tends to converge, the gradual decline of $\sqrt{\expect^\prime[g^2]_{t-1}^i}$ will lead to the decrease of each $\sigma_i$. That means, the noise distributions are not only adaptive to different dimensions of the gradient, but also adaptive to different iterations of the training.

Since we have $\expect^\prime[g^2]_0 = \vv{0}$ at the first iteration, this value cannot be used to clip the gradient which will cause $s_i = 0$. We set another parameter $G$ called \textit{local clipping threshold} and \Alg applies local clipping only when $\Var[\expect^\prime[g^2]] > G$. In other words, as a kind of prior knowledge about the gradient $g_t$, $\expect^\prime[g^2]_{t-1}$ lacks sufficient information at the beginning of the training. During this phase, \Alg applies \emph{global clipping} as follows:
\begin{align}
    \Bar{g}_t^i = \frac{g_t^i}{\max\{1, \|g_t\|_2 / C\}}\, ,
\end{align}
where $C$ is a parameter called \textit{$\ell^2$-norm clipping bound}. After several iterations,  $\expect^\prime[g^2]$ will become more stable and \Alg will perform local clipping when $\Var[\expect^\prime[g^2]] > G$.

To gain more insight into the advantages of adaptive noise, we illustrate and compare the updates of \Alg, \AlgDpSgd, and their non-private counterparts by testing them on the Beale function
$f(x, y) = (1.5-x+xy)^2+(2.25-x+xy^2)^2\text{+}(2.625-x+xy^3)^3$, as shown in \cref{opt_fig:rms,opt_fig:Sgd}. In this experiment, we set $\gamma = 0.1$, $\gamma^\prime=0.9$, $\beta=1.5$ and $G=10^{-6}$.

We observe
that the trajectory of \AlgDpSgd exhibits a remarkable deviation from that of its non-private version, while
\Alg and its non-private version display similar trajectories.
Quantitatively, We define the distance between two trajectories $P=\{p_i\}_{i=1}^n$ and $Q=\{q_i\}_{i=1}^n$ of the same length as
$D(P, Q) = \frac{1}{| P |} \sum_{i=1}^n {\| p_i - q_i\|_2}$. Note that the length of a trajectory is the number of training iterations.
The distance between the trajectories of \AlgDpSgd and its non-private version is 0.90 and the distance between \Alg and its non-private version is 0.21.
Experiments on all widely used test functions for optimization \cite{wiki:Test_functions_for_optimization} also show similar results.

The above experiment suggests that adaptive noise can mitigate the deviation of the noisy result from the original result and the performance of \Alg is more robust to the privacy-protecting Gaussian noise than \AlgDpSgd.
Recall that \AlgDpSgd adds Gaussian noise with the same intensity to all dimensions.
In contrast, \Alg updates each dimension separately and the intensity of the added Gaussian noise relies on
the sensitivity of each dimension of the gradient.

\begin{figure}[!ht]
	\centering
	\begin{subfigure}[b]{0.22\textwidth}
		\includegraphics[width=\textwidth]{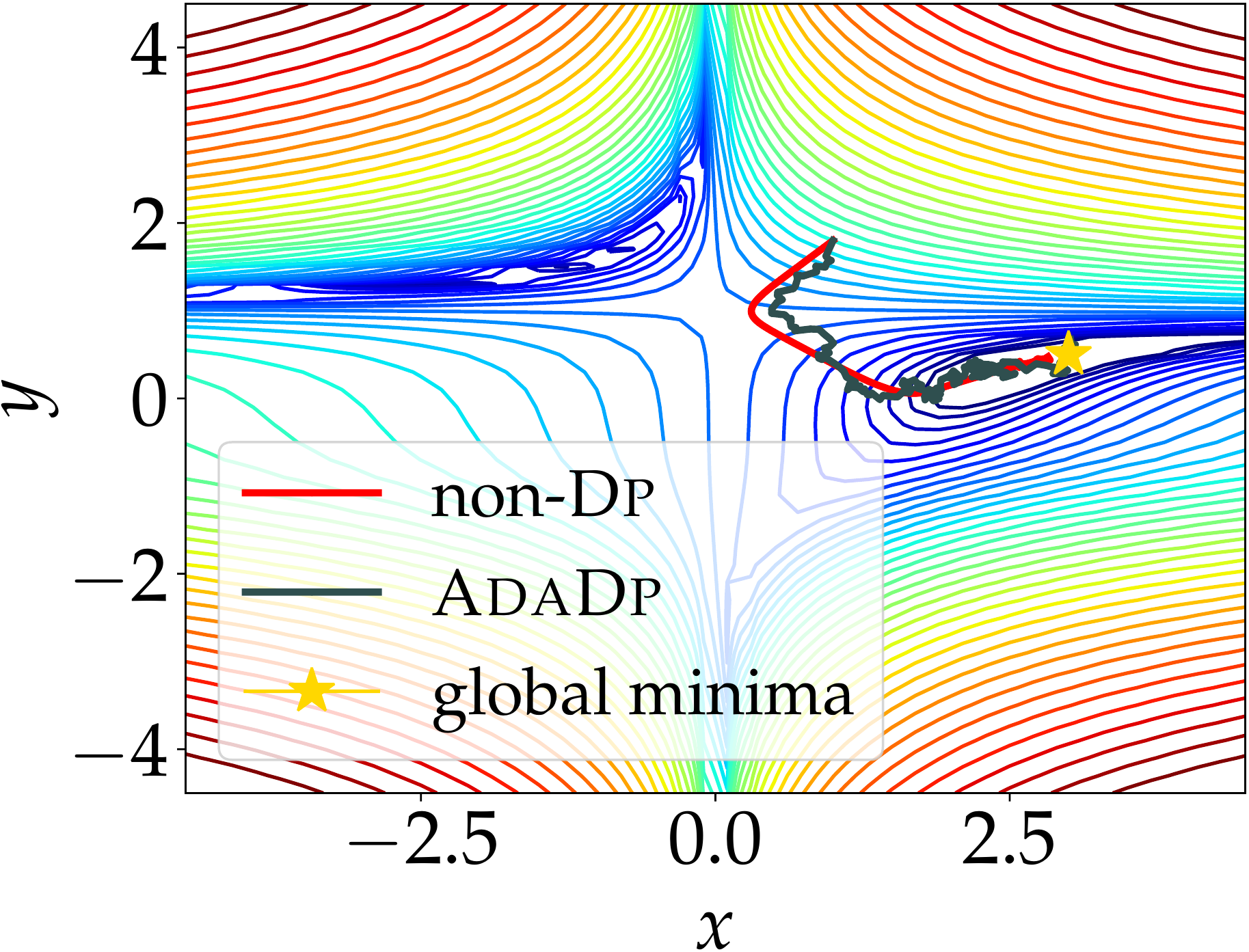}
		\caption{\Alg}
		\label{opt_fig:rms}
	\end{subfigure}
	~
	\begin{subfigure}[b]{0.22\textwidth}
		\includegraphics[width=\textwidth]{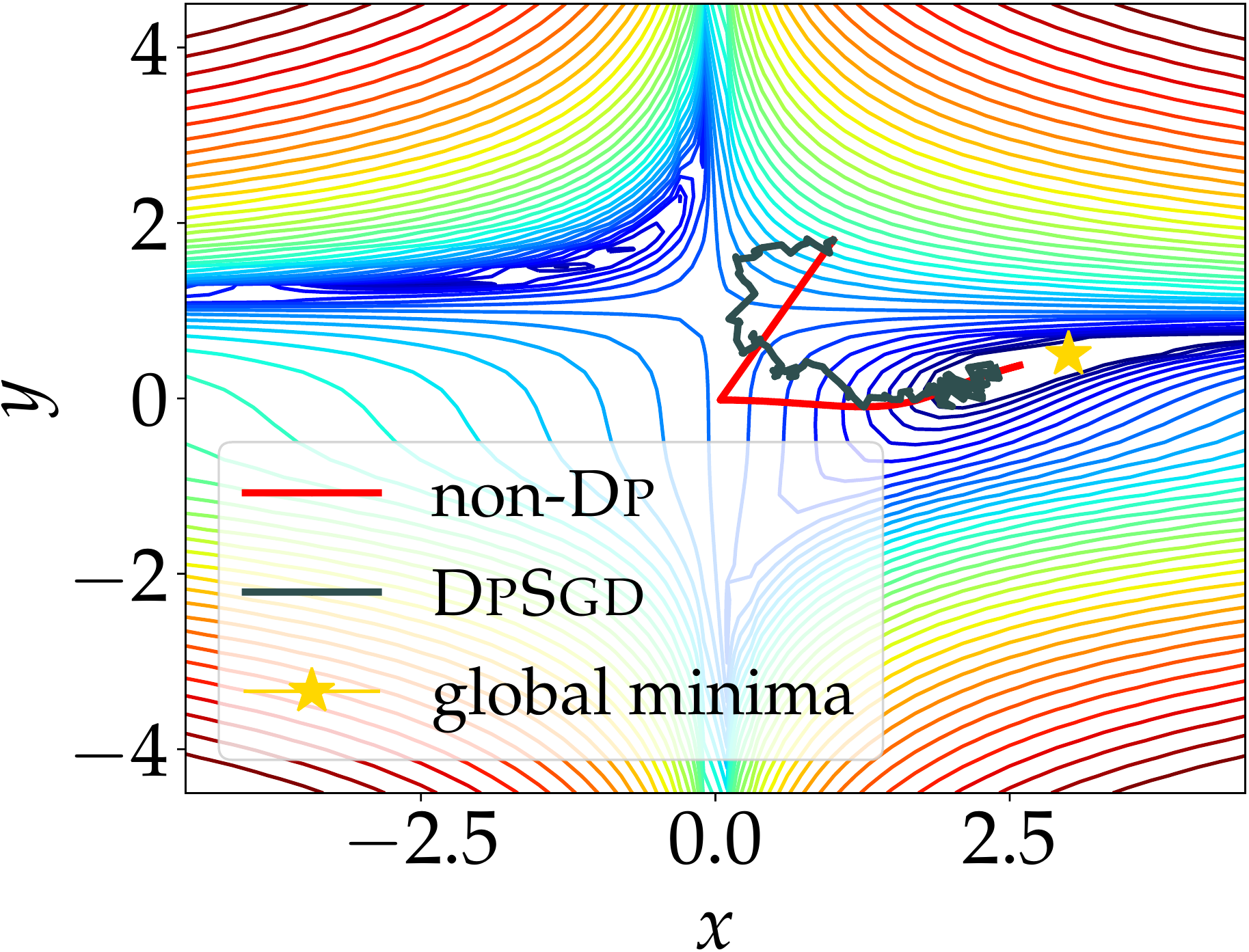}
		\caption{\AlgDpSgd}
		\label{opt_fig:Sgd}
	\end{subfigure}

	\caption{The trajectories of \Alg, \AlgDpSgd and corresponding non-private algorithms on Beale function with 150  training iterations.}\label{fig:opt}
\end{figure}



\subsection{Algorithm and Main Results}
\label{alg_rslt}
\begin{algorithm}[!ht]
	\caption{\Alg}
	\label{the_alg}
	\textbf{Input:} examples ${\{x_1, x_2, \dots, x_N\}}$, learning rate ${\eta}$, lot size ${L}$, noise scale ${\sigma_*}$, local clipping threshold $G$, $\ell^2$-norm clipping bound $C$\\
	\textbf{Output:}  ${\theta_T}$ and privacy cost ${(\epsilon, \delta)}$
	\begin{algorithmic}[1]
		\State \textbf{Initialize} $\theta_0 = \{\theta_0^1, \theta_0^2, \dots, \theta_0^m\}$ randomly
		\For{${t\in[T]}$}
		\State Take a random sample ${L_t}$ with probability ${L/N}$
		\State \textbf{Compute the gradient for each example ${x}$ in $L_t$}
		\State \label{ln:orig_grad} ${g_t(x) \gets \nabla{\mathcal{L}(\theta_{t-1}, x) }}$
		\For{the $i$\textsuperscript{th} dimension of the gradient}
		\If{$\Var[\sqrt{\expect^\prime[g^2]_{t-1}}] > G$} \label{ln:local_clip_bound}	
			\State \textbf{Gradient local clipping}
			\State \label{ln:clip_1}$s_i = \beta\sqrt{{\expect^\prime[g^2]_{t-1}^i}}$
			\State \label{ln:clip_2}${\Bar{g}_t^i \gets \min\{\max\{g_t^i, {-s_i}\}, {s_i}\}}$
			\State \textbf{Add adaptive Gaussian noise}
			\State \label{ln:noise_1}$\sigma_i = \beta\sigma_*\sqrt{m \expect^\prime[g^2]_{t-1}^i}$
			\State \label{ln:noise_2} ${\tilde{g}_t^i \gets \Bar{g}_t^i + \mathcal{N}(0,\, \sigma_i^2)}$
		\Else
			\State \textbf{Gradient global clipping}
			\State \label{ln:global_clip} $\Bar{g}_t^i \gets \frac{g_t^i}{\max\{1, \|g_t\|_2 / C\}}$
			\State\textbf{Add non-adaptive Gaussian noise}
			\State \label{ln:global_noise} ${\tilde{g}_t^i \gets \Bar{g}_t^i + \mathcal{N}(0,\,C^2\sigma_*^2)}$
		\EndIf
		\EndFor
		\State \textbf{Update the expectations}
		\State \label{ln:exp_upd_1}$\expect[\tilde{g}^2]_t \gets (1-\gamma)\expect[\tilde{g}^2]_{t-1} + \gamma (\tilde{g}_t)^2$
		\State \label{ln:exp_upd_2}$\expect^\prime[g^2]_t \gets \gamma^\prime
		\expect^\prime[{g}^2]_{t-1} + (1-\gamma^\prime) (g_t)^2$
		\State \textbf{Update the parameter}
		\State \label{ln:upd_term} ${\Delta\hat{\theta}_t \gets \frac{ \tilde{g}_t}{\sqrt{\expect[\tilde{g}^2]_t + \epsilon_0}}}$
		\State \label{ln:update}${\theta_t \gets \theta_{t-1} - \eta \frac{1}{L}\sum_{i}{\Delta\hat{\theta}_t}}$
		\EndFor
		\State \label{ln:return}\Return $\theta_{T} = \{\theta_T^1, \theta_T^2, \dots, \theta_T^m\}$
	\end{algorithmic}
\end{algorithm}

Now we present  \Alg in \cref{the_alg}. Let $\theta_0 =
\{\theta_0^1, \theta_0^2, \dots, \theta_0^n\}$ denote the initial values of the  $n$
trainable parameters. Note that, a lot is a sample from the training dataset
which is different from batch. There can be several batches in a lot and the lot
size is the batch size multiplied with the number of batches in it.
At each step $t$, we compute the gradient with per-example
gradient algorithm \cite{goodfellow2015efficient} (\cref{ln:orig_grad}). Then for each dimension of the gradient, we perform the local clipping at line \cref{ln:clip_1} and \cref{ln:clip_2} if the local clipping threshold is satisfied (\cref{ln:local_clip_bound}). After that, we calculate $\sigma_i$ and add Gaussian noise to the $i$\textsuperscript{th} dimension of the gradient (\cref{ln:noise_1} and \cref{ln:noise_2}). If the local clipping threshold is not achieved, we perform global clipping and add non-adaptive noise at \cref{ln:global_clip} and \cref{ln:global_noise}. Then we update the two expectations used in the adaptive learning rate and the adaptive noise correspondingly (\cref{ln:exp_upd_1} and \cref{ln:exp_upd_2}). 
Afterwards, we
update the parameters at the $t$\textsuperscript{th} iteration with ${\Delta\hat{\theta}_t}$ (\cref{ln:update}). After $T$ training steps, \Alg returns
 the final values $\theta_{T} = \{\theta_T^1, \theta_T^2, \dots,
\theta_T^m\}$ (\cref{ln:return}).

Before presenting our main results on the privacy guarantee of \Alg, let us review the definition of differential privacy.
\begin{definition} [Differential Privacy \cite{dwork2006calibrating, abadi2016deep}]
A randomized mechanism $M$ satisfies $(\epsilon,\delta)$-differential privacy, if for any two neighboring datasets $D$ and $D^\prime$ that differ only in
one tuple, and for any possible subset of outputs $\mathcal{O}$ of $M$, we have
\begin{align*}
\Pr( M(D) \in \mathcal{O} ) \leq e^{\epsilon} \cdot \Pr(M(D^\prime) \in  \mathcal{O})+\delta,
\end{align*}
where $\Pr( \cdot )$ denotes the probability of an event. If $\delta=0$, $M$ is said to satisfy $\epsilon$-differential privacy.
\end{definition}


We now show the privacy guarantee of \Alg in \cref{sec_threm}.

\begin{theorem}
\label{sec_threm}
Let $\binom{\cdot}{\cdot}$ denote the binomial coefficient and $B(l) =  \sum_{i=0}^{l}(-1)^i\binom{l}{i}e^{(i-1){i}/{(2\sigma_*^2)}}$.
 Given a privacy budget $(\epsilon, \delta)$, the sampling ratio $q=\frac{L}{N}$ and any integer $\alpha \geq 2$, if the noise scale $\sigma_i$ in Algorithm \ref{the_alg} satisfies $\sum_{i}^{m}\frac{s_i^2}{\sigma_i^2} \leq \frac{1}{\sigma_*^2}$ and $\sigma_*$ satisfies
	\begin{align} \label{threm_cond}
	    &\frac{T}{\alpha - 1} \log\Big(1 + q^2\binom{\alpha}{2}\min\{4(e^{1/\sigma_*^2} - 1), 2e^{1/\sigma_*^2}\} \nonumber \\ & +  4\sum_{j=3}^{\alpha}q^j\binom{\alpha}{j}\sqrt{B(2\ceil{j/2})\cdot B(2\floor{j/2})}\Big)+ \frac{\log(1/\delta)}{\alpha - 1} \leq \epsilon,
	\end{align}
 then Algorithm \ref{the_alg} is $(\epsilon, \delta)$-differentially private.
\end{theorem}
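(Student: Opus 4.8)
The plan is to decompose the privacy analysis into four stages: (i) reduce a single iteration of \Alg to a canonical Gaussian mechanism of $\ell^2$-sensitivity $1$ and scale $\sigma_*$ by means of \cref{fst_lemma}; (ii) bound the per-iteration R\'enyi divergence under subsampling at rate $q=L/N$ using the subsampled RDP framework of Wang et al.; (iii) compose additively over the $T$ iterations; and (iv) convert the resulting RDP guarantee back to $(\epsilon,\delta)$-DP. Condition \cref{threm_cond} will turn out to be exactly the assertion that the converted $\epsilon$ does not exceed the budget, so the theorem follows immediately once the four stages are in place.

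First I would handle the per-step reduction. In either branch of \cref{the_alg}, coordinate $i$ is clipped to sensitivity $s_i$ (or $C$ in the global branch) and perturbed by independent $\mathcal{N}(0,\sigma_i^2)$ noise (or $\mathcal{N}(0,C^2\sigma_*^2)$), with the scaling arranged so that $\sum_i s_i^2/\sigma_i^2 \le 1/\sigma_*^2$. Reusing the computation inside the proof of \cref{fst_lemma}, the privacy-loss variable of one step is Gaussian, distributed as $\mathcal{N}(H/2, H)$ with $H=\sum_i s_i^2/\sigma_i^2$, so that every likelihood-ratio moment of the step is a monotone function of $H$ (for instance $\expect_{\mu_0}[(\mu_1/\mu_0)^j]=e^{(j-1)jH/2}$ for the shifted-Gaussian pair). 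Since $H\le 1/\sigma_*^2$, all these moments are maximized at the boundary $H=1/\sigma_*^2$, which is precisely the moment profile of the canonical Gaussian mechanism of sensitivity $1$ and scale $\sigma_*$. This unifies the local and global clipping branches, both of which yield $H=1/\sigma_*^2$, and lets me analyze only the canonical mechanism thereafter.

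Next I would invoke the subsampled RDP bound for this canonical mechanism subsampled at rate $q$. The key identification is that for $\mu_0=\mathcal{N}(0,\sigma_*^2)$ and $\mu_1=\mathcal{N}(1,\sigma_*^2)$ one has $\expect_{\mu_0}[(\mu_1/\mu_0)^i]=e^{(i-1)i/(2\sigma_*^2)}$, whence $B(l)=\expect_{\mu_0}[(1-\mu_1/\mu_0)^l]$ is exactly the $l$-th central moment of the likelihood ratio; in particular the second-order term $\epsilon(2)=1/\sigma_*^2$ reproduces the factor $\min\{4(e^{1/\sigma_*^2}-1),2e^{1/\sigma_*^2}\}$. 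For $j\ge 3$ the higher-order terms $\expect_{\mu_0}[|\mu_1/\mu_0-1|^j]$ are controlled by Cauchy--Schwarz as $\sqrt{B(2\ceil{j/2})\,B(2\floor{j/2})}$, since $2\ceil{j/2}$ and $2\floor{j/2}$ are even and absorb the absolute value. Plugging these explicit moments into the subsampled RDP bound reproduces the logarithmic expression in \cref{threm_cond} (without the $\tfrac{T}{\alpha-1}$ prefactor) as the per-step RDP at integer order $\alpha$. Because the $\sigma_i$ are data-dependent, I would invoke adaptive RDP composition across the $T$ steps, which is valid as each step obeys the same per-step bound; this multiplies by $T$ and gives the $\tfrac{T}{\alpha-1}\log(\cdots)$ term. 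A final standard conversion, stating that $(\alpha,\rho)$-RDP implies $(\rho+\log(1/\delta)/(\alpha-1),\delta)$-DP, supplies the remaining $\log(1/\delta)/(\alpha-1)$ term, and \cref{threm_cond} is exactly the requirement that the sum be at most $\epsilon$.

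The main obstacle is the first stage. As stated, \cref{fst_lemma} only transfers a single $(\epsilon,\delta)$ guarantee from the canonical Gaussian to the adaptive-noise mechanism, whereas the subsampled-RDP machinery needs the \emph{entire} moment curve of a step to be dominated by that of the canonical mechanism. The crux is therefore to extract from the Gaussian form $\mathcal{N}(H/2,H)$ of the privacy-loss variable the monotonicity in $H$ of all the moments $\expect_{\mu_0}[(\mu_1/\mu_0)^j]$ and $B(2k)$ appearing in the bound, so that the constraint $H\le 1/\sigma_*^2$ certifies the worst case $H=1/\sigma_*^2$ uniformly across orders. Once this domination and the handling of the global-clipping branch are secured, the remaining steps are routine applications of the subsampled-RDP bound, RDP composition, and the RDP-to-DP conversion.
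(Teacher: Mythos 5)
Your proposal is correct and uses the same ingredients as the paper (\cref{fst_lemma}, the subsampled-RDP bound of \cref{apdx_threm}, composition via \cref{apdx_lem1}, and the conversion of \cref{apdx_lem2}), but it assembles them in a different --- and in fact more careful --- order. The paper first proves \cref{fst_threm}: it runs the subsampling/composition/conversion pipeline entirely on the canonical Gaussian mechanism of sensitivity $1$ and scale $\sigma_*$, and only afterwards invokes \cref{fst_lemma} to transfer the resulting composed $(\epsilon,\delta)$ guarantee to the adaptive-noise mechanism of \cref{the_alg}, together with an appeal to post-processing. You instead perform the reduction per step and at the level of the R\'enyi/moment curve: you observe that one step of \cref{the_alg} has privacy-loss variable $\mathcal{N}(H/2,H)$ with $H=\sum_i s_i^2/\sigma_i^2 \le 1/\sigma_*^2$ in both the local- and global-clipping branches, that the likelihood-ratio moments $e^{(j-1)jH/2}$ are monotone in $H$, and hence that every step is dominated, order by order, by the canonical mechanism; only then do you apply subsampling, composition over the $T$ (adaptively chosen) steps, and RDP-to-DP conversion. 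The difference matters: \cref{fst_lemma} as stated transfers only a single, fixed $(\epsilon,\delta)$ guarantee of the \emph{unsubsampled} mechanism, whereas the quantity actually fed into \cref{apdx_threm} is the per-step RDP curve, so the paper's closing claim that ``the proof of \cref{fst_threm} is also suitable for $M^\prime$'' is precisely the gap you flag as your ``main obstacle'' and then close with the moment-domination argument. Your route therefore buys rigor exactly where the paper is loosest, at the price of needing monotonicity in $H$ of every term appearing in the subsampled bound (e.g.\ the even central moments $B(2\ceil{j/2})$, $B(2\floor{j/2})$), which you assert but, in a full write-up, should verify explicitly, for instance along the lines of the derivative computation already carried out in the proof of \cref{fst_lemma}.
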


To prove \cref{sec_threm}, we use the techniques of RDP to analyze the privacy cost of the composition of Gaussian mechanisms. The results that we obtain via RDP are later translated to DP.

\begin{definition} [R\'enyi Divergence \cite{mironov2017renyi}]
Given two probability distributions $P$ and $Q$, the R\'enyi divergence between $P$ and $Q$ with order $\alpha > 1$ is defined by
	$$D_\alpha(P||Q) = \frac{1}{\alpha - 1}\log\mathbb{E}_{x\sim Q}\Big(\frac{P(x)}{Q(x)}\Big)^\alpha.$$
\end{definition}

\begin{definition} [R\'enyi differential privacy \cite{mironov2017renyi}]
 A randomized mechanism $f: D \rightarrow \mathbb{R}^d$ is said to be $(\alpha, \epsilon)$-R\'enyi differential private if for any adjacent datasets $D, D^\prime$, it holds that
	$$D_\alpha(f(D)||f(D^\prime)) \leq \epsilon.$$
\end{definition}

Our proofs are based on the three key properties of RDP, as stated in  \cref{apdx_lem3,apdx_lem1,apdx_lem2}.
\begin{lemma}[RDP of Gaussian Mechanism \cite{mironov2017renyi}]
	\label{apdx_lem3}If $\|f(\cdot)\|_2 \leq 1$, then the Gaussian mechanism $M(D) = f(D) + \mathcal{N}(0, \sigma^2)$ satisfies $(\alpha, \alpha/(2\sigma_2))$-RDP.
\end{lemma}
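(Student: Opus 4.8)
The plan is to prove the lemma by directly evaluating the R\'enyi divergence between the two output distributions and then invoking the sensitivity hypothesis. For adjacent datasets $D, D^\prime$, the mechanism produces $M(D) \sim \mathcal{N}(f(D), \sigma^2 I)$ and $M(D^\prime) \sim \mathcal{N}(f(D^\prime), \sigma^2 I)$, i.e. two multivariate Gaussians sharing the same isotropic covariance and differing only in their means $\mu_0 = f(D)$ and $\mu_1 = f(D^\prime)$. By the definition of R\'enyi divergence, it suffices to evaluate the $\alpha$-th moment integral $\int \frac{M(D)(x)^\alpha}{M(D^\prime)(x)^{\alpha-1}}\,dx$ and multiply its logarithm by $\frac{1}{\alpha-1}$.

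First I would reduce to the scalar case. Because the covariance is $\sigma^2 I$, both densities factor coordinatewise, so the moment integral factors into a product of one-dimensional integrals, one per coordinate; it therefore suffices to evaluate $\int \frac{p(x)^\alpha}{q(x)^{\alpha-1}}\,dx$ for $p = \mathcal{N}(\mu_0^{(k)}, \sigma^2)$ and $q = \mathcal{N}(\mu_1^{(k)}, \sigma^2)$ and take the product over coordinates (equivalently, sum the per-coordinate exponents). Note that the ratio of normalizing constants is $(2\pi\sigma^2)^{-\alpha/2} / (2\pi\sigma^2)^{-(\alpha-1)/2} = (2\pi\sigma^2)^{-1/2}$, exactly the constant of a single Gaussian.

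The core calculation is completing the square in the exponent. Collecting the quadratic form $\alpha(x-\mu_0)^2 - (\alpha-1)(x-\mu_1)^2$, the coefficient of $x^2$ is $\alpha - (\alpha-1) = 1$, so it rewrites as $(x - c)^2 - \alpha(\alpha-1)(\mu_0 - \mu_1)^2$ with center $c = \alpha\mu_0 - (\alpha-1)\mu_1$, the constant term collapsing to $-\alpha(\alpha-1)(\mu_0-\mu_1)^2$ after expansion. Consequently the integrand is a genuine Gaussian density centered at $c$ (whose integral is $1$) times the constant factor $\exp\!\big(\alpha(\alpha-1)(\mu_0-\mu_1)^2/(2\sigma^2)\big)$, so the scalar moment integral equals this factor. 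Taking the product over coordinates gives $\int \frac{M(D)(x)^\alpha}{M(D^\prime)(x)^{\alpha-1}}\,dx = \exp\!\big(\alpha(\alpha-1)\|\mu_0 - \mu_1\|_2^2/(2\sigma^2)\big)$, whence $D_\alpha(M(D)\|M(D^\prime)) = \frac{\alpha\,\|f(D) - f(D^\prime)\|_2^2}{2\sigma^2}$.

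Finally I would apply the sensitivity hypothesis. Reading $\|f(\cdot)\|_2 \le 1$ as the $\ell^2$-sensitivity bound $\|f(D) - f(D^\prime)\|_2 \le 1$ (which holds for the clipped gradient query), the closed form is bounded by $\alpha/(2\sigma^2)$ uniformly over all adjacent $D, D^\prime$, establishing $(\alpha, \alpha/(2\sigma^2))$-RDP. There is no genuine analytic obstacle here: the only step requiring care is the completion of the square, specifically verifying that the $x^2$ coefficient cancels to $1$ and that the residual constant is exactly $-\alpha(\alpha-1)(\mu_0-\mu_1)^2$, so that the leftover integrand normalizes to $1$ and no spurious $\sigma$- or $\alpha$-dependent prefactor survives.
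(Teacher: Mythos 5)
Your computation is correct: completing the square does yield the exact closed form $D_\alpha\big(M(D)\,\|\,M(D^\prime)\big) = \alpha\|f(D)-f(D^\prime)\|_2^2/(2\sigma^2)$, and bounding the sensitivity by $1$ gives the claimed $(\alpha, \alpha/(2\sigma^2))$-RDP (the $\sigma_2$ in the statement is a typo for $\sigma^2$). Note, however, that the paper does not prove this lemma at all --- it is imported verbatim from the cited R\'enyi differential privacy reference \cite{mironov2017renyi}, and your derivation is essentially the standard proof given there, so there is no divergence in approach to report. One point you handled well and that deserves emphasis: the hypothesis as written, $\|f(\cdot)\|_2 \leq 1$, taken literally (each output has norm at most $1$) only yields $\|f(D)-f(D^\prime)\|_2 \leq 2$ and hence the weaker guarantee $(\alpha, 2\alpha/\sigma^2)$-RDP; the stated conclusion requires reading the condition as an $\ell^2$-sensitivity bound, which is how the paper uses it (e.g., in Lemma~\ref{fst_threm} with clipped gradients), and your explicit flagging of that reinterpretation is exactly the right call.
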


\begin{lemma} [Composition of RDP \cite{mironov2017renyi}]
	\label{apdx_lem1}
     For two randomized mechanisms $f, g$ such that $f$ is $(\alpha, \epsilon_1)$-RDP and $g$ is $(\alpha, \epsilon_2)$-RDP, the composition of $f$ and $g$ which is defined as $(X, Y)$ (a sequence of results), where $X\sim f$ and $Y\sim g$, satisfies $(\alpha, \epsilon_1 + \epsilon_2)$-RDP.
\end{lemma}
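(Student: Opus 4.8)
The plan is to work directly from the definition of Rényi divergence and show that it decomposes additively across the two component mechanisms. Fix any pair of adjacent datasets $D, D^\prime$ and let $P$ and $Q$ denote the distributions of the composed output $(X, Y)$ on inputs $D$ and $D^\prime$ respectively; the goal is to bound $D_\alpha(P \| Q)$ by $\epsilon_1 + \epsilon_2$. The entry point is to unfold the defining quantity $\mathbb{E}_{(x,y)\sim Q}\big(P(x,y)/Q(x,y)\big)^\alpha$ and exploit the product structure of the joint densities.

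First I would factorize the joint densities, writing $P(x,y) = P_X(x)\,P_{Y\mid X}(y\mid x)$ and $Q(x,y) = Q_X(x)\,Q_{Y\mid X}(y\mid x)$, so that the integrand splits as $\big(P_X(x)/Q_X(x)\big)^\alpha \big(P_{Y\mid X}(y\mid x)/Q_{Y\mid X}(y\mid x)\big)^\alpha$. When $Y\sim g$ is drawn independently of $X$ as in the statement, the conditional factors $P_{Y\mid X}$ and $Q_{Y\mid X}$ do not depend on $x$, so by Fubini's theorem the expectation over $Q$ factors as a product of two single-mechanism expectations, $\mathbb{E}_{x\sim Q_X}\big(P_X(x)/Q_X(x)\big)^\alpha \cdot \mathbb{E}_{y\sim Q_Y}\big(P_Y(y)/Q_Y(y)\big)^\alpha$.

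Next I would take logarithms and divide by $\alpha - 1$: the logarithm of the product becomes a sum, yielding exactly $D_\alpha(P_X \| Q_X) + D_\alpha(P_Y \| Q_Y)$. Since $f$ is $(\alpha, \epsilon_1)$-RDP and $g$ is $(\alpha, \epsilon_2)$-RDP, these two terms are bounded by $\epsilon_1$ and $\epsilon_2$ respectively for the fixed adjacent pair $D, D^\prime$. Summing gives $D_\alpha(P \| Q) \le \epsilon_1 + \epsilon_2$, and since $D, D^\prime$ were an arbitrary adjacent pair, the composition satisfies $(\alpha, \epsilon_1 + \epsilon_2)$-RDP.

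The one delicate point — and the one I expect to matter for the intended application to \cref{sec_threm}, where successive gradient-release mechanisms are \emph{not} independent — is the adaptive case, in which $g$ may depend on the realized value of $X$. There the inner factor $\mathbb{E}_{y \sim Q_{Y\mid X=x}}\big(P_{Y\mid X=x}(y)/Q_{Y\mid X=x}(y)\big)^\alpha$ no longer pulls out of the outer integral. The fix is to bound it pointwise: for every fixed $x$ the RDP guarantee of $g$ gives $\mathbb{E}_{y\sim Q_{Y\mid X=x}}\big(P_{Y\mid X=x}(y)/Q_{Y\mid X=x}(y)\big)^\alpha \le e^{(\alpha-1)\epsilon_2}$, so this factor can be replaced by its supremum and pulled outside, after which the remaining outer expectation is precisely $\mathbb{E}_{x\sim Q_X}\big(P_X(x)/Q_X(x)\big)^\alpha \le e^{(\alpha-1)\epsilon_1}$. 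Multiplying the two exponential bounds and applying $\frac{1}{\alpha-1}\log(\cdot)$ recovers the same additive conclusion. This pointwise-then-multiply step is the crux of the argument and the place where uniformity of the RDP bound over all outputs is essential.
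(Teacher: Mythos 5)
Your proof is correct. The paper gives no proof of \cref{apdx_lem1} at all --- it is quoted as a known result from \cite{mironov2017renyi} --- and your argument is precisely the standard one from that reference (its Proposition~1): factorize the joint density, bound the inner conditional expectation pointwise by $e^{(\alpha-1)\epsilon_2}$, pull it out of the outer integral, and bound what remains by $e^{(\alpha-1)\epsilon_1}$. Your closing observation is also well taken and is the one substantive point worth recording: as stated, the lemma draws $Y\sim g$ independently of $X$, but the application in \cref{fst_threm} composes $T$ training steps in which each step's mechanism depends on the previously released noisy parameters, so the adaptive (pointwise-bound) version you prove, with the RDP guarantee of $g$ holding uniformly over the realized value of $X$, is the version the paper actually needs.
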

\begin{lemma} [Translation from RDP to DP \cite{mironov2017renyi}]
\label{apdx_lem2}
If a randomized mechanism $f: D \rightarrow \mathbb{R}^d$ satisfies $(\alpha, \epsilon)$-RDP, then it satisfies $(\epsilon + \frac{\log1/\delta}{\alpha - 1}, \delta)$-DP where $0 <\delta <1$.
\end{lemma}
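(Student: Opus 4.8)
The plan is to bound the privacy of \Alg in the RDP framework and then convert to $(\epsilon,\delta)$-DP, following the pipeline: per-step base RDP, subsampling amplification, composition across the $T$ iterations, and finally the RDP-to-DP translation of \cref{apdx_lem2}. The entire argument is carried out at the single fixed order $\alpha$ appearing in \cref{threm_cond}, and the condition is engineered so that its left-hand side equals exactly the DP budget produced at the end of this pipeline; the work therefore amounts to verifying each arrow and matching the resulting expression term-by-term.

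First I would argue that, at each iteration, the noise-addition step of \Alg is---for the purpose of any R\'enyi/DP analysis---equivalent to the canonical Gaussian mechanism with $\ell^2$-sensitivity $1$ and noise $\sigma_*$. In the global-clipping branch this is immediate: clipping to $\|\bar g_t\|_2\le C$ and adding $\mathcal{N}(0,C^2\sigma_*^2)$ is, after rescaling by $1/C$, sensitivity-$1$ Gaussian noise at scale $\sigma_*$. In the local-clipping branch I would invoke \cref{fst_lemma}: since the noise scales satisfy $\sum_i s_i^2/\sigma_i^2\le 1/\sigma_*^2$, the privacy-loss variable has the same $\mathcal{N}(H/2,H)$ form with $H\le 1/\sigma_*^2$ as that of the sensitivity-$1$, scale-$\sigma_*$ Gaussian, so every R\'enyi divergence of the local-clipping step is no larger. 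Hence by \cref{apdx_lem3} the per-step base mechanism is $(\alpha,\alpha/(2\sigma_*^2))$-RDP, i.e.\ its order-$j$ R\'enyi curve is $\epsilon(j)=j/(2\sigma_*^2)$, independent of which branch fires.

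Next I would account for the subsampling: each lot $L_t$ is drawn with rate $q=L/N$, so the mechanism actually run per step is a subsampled Gaussian mechanism. Applying the subsampled-RDP upper bound of \cite{wang2018subsampled} to a base mechanism whose RDP curve is $\epsilon(j)=j/(2\sigma_*^2)$ yields a per-step guarantee $\epsilon_{\mathrm{step}}(\alpha)=\frac{1}{\alpha-1}\log(\textsc{Inner})$, where \textsc{Inner} is precisely the parenthesized expression in \cref{threm_cond}. The order-$2$ contribution $\binom{\alpha}{2}q^2\min\{4(e^{1/\sigma_*^2}-1),2e^{1/\sigma_*^2}\}$ comes from substituting $e^{(2-1)\epsilon(2)}=e^{1/\sigma_*^2}$, and the higher-order terms $q^j\binom{\alpha}{j}\sqrt{B(2\lceil j/2\rceil)\,B(2\lfloor j/2\rfloor)}$ arise by identifying $B(l)=\sum_{i=0}^l(-1)^i\binom{l}{i}e^{(i-1)\epsilon(i)}$ as the binomial moments of the Gaussian privacy loss under $\epsilon(i)=i/(2\sigma_*^2)$ and bounding the mixed moment by Cauchy--Schwarz into a product of two even-order moments. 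Matching \textsc{Inner} to the cited bound with this Gaussian substitution is the technical crux.

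Finally I would compose and convert. By \cref{apdx_lem1} the $T$ iterations compose additively in RDP, giving $(\alpha,\,T\,\epsilon_{\mathrm{step}}(\alpha))$-RDP, i.e.\ budget $\frac{T}{\alpha-1}\log(\textsc{Inner})$; translating via \cref{apdx_lem2} produces $\bigl(\frac{T}{\alpha-1}\log(\textsc{Inner})+\frac{\log(1/\delta)}{\alpha-1},\,\delta\bigr)$-DP. Condition \cref{threm_cond} asserts exactly that this first coordinate is at most $\epsilon$, and since DP is monotone in its $\epsilon$ parameter, \Alg is $(\epsilon,\delta)$-DP. The main obstacle I anticipate is the third step---rigorously importing the subsampled-RDP bound and verifying that the Gaussian moments collapse into the stated $B(l)$ factors---together with the subtlety that the per-step mechanism alternates between the global- and local-clipping branches, which the uniform sensitivity-$1$ reduction of the second step neutralizes so that one base curve $\epsilon(j)=j/(2\sigma_*^2)$ governs every iteration.
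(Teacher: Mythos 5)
Your proposal does not prove the statement in question. The statement is \cref{apdx_lem2} itself --- the RDP-to-DP conversion: if $f$ is $(\alpha,\epsilon)$-RDP then it is $(\epsilon + \frac{\log(1/\delta)}{\alpha-1},\delta)$-DP. What you have written instead is essentially the paper's proof of \cref{sec_threm} (the end-to-end privacy guarantee of \Alg): sensitivity-$1$ reduction via \cref{fst_lemma}, per-step RDP via \cref{apdx_lem3}, subsampling amplification via \cref{apdx_threm}, composition via \cref{apdx_lem1}, and then --- in your final paragraph --- an invocation of \cref{apdx_lem2} as a black box to land in $(\epsilon,\delta)$-DP. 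Using the very lemma you were asked to prove as an assumed step makes the argument vacuous as a proof of that lemma; nothing in your pipeline establishes why an RDP bound at order $\alpha$ implies approximate DP with the additive penalty $\frac{\log(1/\delta)}{\alpha-1}$.

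A correct proof must work directly from the definition of R\'enyi divergence. Write $P,Q$ for the output distributions of $f(D)$ and $f(D')$. The hypothesis $D_\alpha(P\|Q)\le\epsilon$ gives, for any measurable event $A$, the change-of-measure (H\"older) inequality
\begin{equation*}
P(A)\;=\;\mathbb{E}_Q\!\left[\mathbb{1}_A\,\tfrac{dP}{dQ}\right]\;\le\;\left(\mathbb{E}_Q\!\left[\left(\tfrac{dP}{dQ}\right)^{\alpha}\right]\right)^{1/\alpha} Q(A)^{(\alpha-1)/\alpha}\;\le\;e^{\epsilon(\alpha-1)/\alpha}\,Q(A)^{(\alpha-1)/\alpha}.
\end{equation*}
Then split into two cases: if $P(A)\le\delta$, the DP inequality $P(A)\le e^{\epsilon'}Q(A)+\delta$ holds trivially; if $P(A)>\delta$, raise the displayed bound to the power $\alpha/(\alpha-1)$ to get $Q(A)\ge e^{-\epsilon}P(A)^{\alpha/(\alpha-1)}$, whence $P(A)\le e^{\epsilon}\,Q(A)\,P(A)^{-1/(\alpha-1)}\le e^{\epsilon+\log(1/\delta)/(\alpha-1)}Q(A)$. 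Both cases give $(\epsilon+\frac{\log(1/\delta)}{\alpha-1},\delta)$-DP. (The paper itself states \cref{apdx_lem2} as a cited result from Mironov's work without proof, so there is no in-paper argument to match against; but the above is the argument your blind attempt needed to supply, and your attempt contains none of it.)
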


\cref{apdx_threm} analyzes the privacy amplification in the subsampling setting.
\begin{lemma} [RDP of subsampled Gaussian Mechanism \cite{wang2018subsampled}]
	\label{apdx_threm}
     Define function $B(\epsilon, l)= \sum_{i=0}^{l}(-1)^i\binom{l}{i}e^{(i-1)\epsilon(i)}$. Given a dataset $D$ of $n$ records and a Gaussian mechanism $f$ satisfying $(\alpha, \epsilon(\alpha))$-RDP,
define a new randomized mechanism $f \circ \mathbf{subsample}$ as: (1) subsample $m$ records where $ q = m/n$, (2) apply these $m$ records as the input of mechanism $f$, then for any integer $\alpha > 1$, $f \circ \mathbf{subsample}$ satisfies $(\alpha, \epsilon^\prime(\alpha))$-RDP, where:
	\begin{align}
	\label{sub_rdp}
\epsilon^\prime(\alpha) \leq &\dfrac{1}{\alpha - 1} \log\Big(1 + q^2\binom{\alpha}{2}\min\{4(e^{1/\sigma^2 - 1}), 2e^{1/\sigma^2}\} \nonumber \\ & + 4\sum_{j=3}^{\alpha}q^j\binom{\alpha}{j}\sqrt{B(2\ceil{j/2})\cdot B(2\floor{j/2})}\Big).
\end{align}
\end{lemma}

Before showing the proof of \cref{sec_threm}, we need to establish the following important lemma. In \cref{fst_threm}, we bound the noise level $\sigma$ of a Gaussian mechanism in the subsampling setting.
\begin{lemma}
\label{fst_threm}
	Given the sampling probability $q = L/N$, the number of steps $T$, the Gaussian mechanism $M = f(\cdot) + \mathcal{N}{(0, \sigma^{2})}$ where $\| f(\cdot)$ $\|_2 \leq 1$, then the composition of $T$ these mechanisms satisfies $(\epsilon, \delta)$-differentially private if $\sigma$ satisfies:
	\begin{align}
	\label{eq:fst_threm}
	&\dfrac{T}{\alpha - 1} \log\Big(1 + q^2\binom{\alpha}{2}\min\{4(e^{1/\sigma^2} - 1), 2e^{1/\sigma^2}\} \\ & + 4\sum_{j=3}^{\alpha}q^j\binom{\alpha}{j}\sqrt{B(2\ceil{j/2})\cdot B(2\floor{j/2})}\Big) + \dfrac{\log(1/\delta)}{\alpha - 1} \leq \epsilon,
	\end{align}
	where $\alpha$ can be any integer satisfying $\alpha \geq 2$ and the function $B$ is defined as $B(l) = \sum_{i=0}^{l}(-1)^i\binom{l}{i}e^{(i-1){i}/{(2\sigma^2)}}$.
\end{lemma}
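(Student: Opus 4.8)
The plan is to prove \cref{fst_threm} by chaining together the four R\'enyi-differential-privacy results stated above and converting back to $(\epsilon,\delta)$-DP at the very end. The starting observation is that each of the $T$ iterations is a \emph{subsampled Gaussian mechanism}: one first draws a subsample of ratio $q = L/N$ and then applies $M = f(\cdot) + \mathcal{N}(0,\sigma^2)$ with $\|f(\cdot)\|_2 \le 1$. So the whole composition is a composition of $T$ i.i.d. subsampled Gaussian mechanisms, and the task reduces to tracking the RDP parameter through subsampling, composition, and the final RDP-to-DP translation.

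First I would fix an arbitrary integer $\alpha \ge 2$ and apply \cref{apdx_lem3} to the un-subsampled Gaussian mechanism, obtaining that $M$ is $(\alpha, \alpha/(2\sigma^2))$-RDP; equivalently, its RDP curve is $\epsilon(\alpha) = \alpha/(2\sigma^2)$. The crucial bookkeeping step is to substitute this curve into the subsampled bound of \cref{apdx_threm}: the general coefficient $B(\epsilon,l) = \sum_{i=0}^{l}(-1)^i\binom{l}{i}e^{(i-1)\epsilon(i)}$ specializes, via $\epsilon(i) = i/(2\sigma^2)$, to $B(l) = \sum_{i=0}^{l}(-1)^i\binom{l}{i}e^{(i-1)i/(2\sigma^2)}$, which is exactly the function $B$ named in the statement. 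Hence a single subsampled step satisfies $(\alpha, \epsilon'(\alpha))$-RDP, where $\epsilon'(\alpha)$ is bounded by the right-hand side of \cref{sub_rdp}.

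Next I would invoke \cref{apdx_lem1} to compose the $T$ independent subsampled steps, giving $(\alpha, T\epsilon'(\alpha))$-RDP, and then translate this guarantee with \cref{apdx_lem2} to conclude that the composition is $\bigl(T\epsilon'(\alpha) + \tfrac{\log(1/\delta)}{\alpha - 1},\, \delta\bigr)$-DP. It then suffices to demand that the first coordinate be at most $\epsilon$, i.e.\ $T\epsilon'(\alpha) + \tfrac{\log(1/\delta)}{\alpha-1} \le \epsilon$; upper-bounding $\epsilon'(\alpha)$ by \cref{sub_rdp} and placing the factor $T$ in front of the logarithmic term reproduces exactly inequality \cref{eq:fst_threm}. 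Since $\alpha \ge 2$ was arbitrary, the existence of any such $\alpha$ for which the inequality holds certifies the claimed $(\epsilon,\delta)$-DP.

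There is no deep obstacle here — the result is essentially an assembly of the cited RDP tools. The part requiring the most care is the specialization of $B(\epsilon,l)$ to the Gaussian RDP curve inside \cref{apdx_threm}, together with verifying that the $\min\{4(e^{1/\sigma^2}-1),\,2e^{1/\sigma^2}\}$ term and the summand $B$ inherit the \emph{same} $\sigma$ throughout the expression; one should also note that $(\epsilon,\delta)$-DP, once established for a single admissible $\alpha$, requires no further optimization over $\alpha$.
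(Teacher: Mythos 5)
Your proposal is correct and follows essentially the same route as the paper's own proof: specialize $B(\epsilon,l)$ via the Gaussian RDP curve from \cref{apdx_lem3}, bound each step with the subsampled RDP result of \cref{apdx_threm}, compose over $T$ steps by \cref{apdx_lem1}, convert to $(\epsilon,\delta)$-DP by \cref{apdx_lem2}, and require the resulting bound to be at most $\epsilon$. No gaps; your added remark that a single admissible $\alpha$ suffices (no optimization needed) is consistent with the lemma as stated.
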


\begin{proof}
	By {Lemma \ref{apdx_lem3}}, we could compute $B(\epsilon, l)$ defined in {Lemma \ref{apdx_threm}} on Gaussian mechanisms as
	\begin{align}
		\label{func_b}
		B(l) = \sum_{i=0}^{l}(-1)^i\binom{l}{i}e^{(i-1){i}/{(2\sigma^2)}}.
	\end{align}
	 Since a lot used in \Alg is a subsample of the training dataset, each step is $(\alpha, \epsilon^\prime(\alpha))$-RDP where $\epsilon^\prime(\alpha)$ satisfies \eqref{sub_rdp} in {Lemma \ref{apdx_threm}}. Then after $T$ training steps of \Alg, we could obtain the total privacy cost $(\alpha, T\epsilon^\prime(\alpha))$ via composing such $T$ subsampled Gaussian mechanisms depending on {Lemma \ref{apdx_lem1}}. Then by substituting the function $B(\epsilon, l)$ in \eqref{sub_rdp} with  \eqref{func_b}, $T\epsilon^\prime(\alpha)$ can be clearly expressed as
	 \begin{align}
	 \label{total_rdp}
	 	T\epsilon^\prime(\alpha) \leq &\frac{T}{\alpha - 1}\log\Big(1 + q^2\binom{\alpha}{2}\min\{4(e^{1/\sigma^2} - 1), 2e^{1/\sigma^2}\} \nonumber\\ & +  4\sum_{j=3}^{\alpha}q^j\binom{\alpha}{j}\sqrt{B(2\ceil{j/2})\cdot B(2\floor{j/2})}\Big).
	 \end{align}
	 At last, through converting $(\alpha, T\epsilon^\prime(\alpha))$ to DP representation via {Lemma \ref{apdx_lem2}}, \Alg satisfies $(T\epsilon^\prime(\alpha)+\frac{\log(1/\delta)}{\alpha - 1}, \delta)$-DP. Let $T\epsilon^\prime(\alpha)+\frac{\log(1/\delta)}{\alpha - 1} \leq \epsilon$ where $\epsilon$ is the given privacy budget and combine this with \eqref{total_rdp} as follows:
	 \begin{align*}
	    &\frac{T}{\alpha - 1} \log\Big(1 + q^2\binom{\alpha}{2}\min\{4(e^{1/\sigma^2} - 1), 2e^{1/\sigma^2}\} \\ & +  4\sum_{j=3}^{\alpha}q^j\binom{\alpha}{j}\sqrt{B(2\ceil{j/2})\cdot B(2\floor{j/2})}\Big)+ \frac{\log(1/\delta)}{\alpha - 1} \leq \epsilon,
	\end{align*}
	 then the proof is completed.
\end{proof}

Now we are ready to prove \cref{sec_threm}.

\begin{proof}


Combining \cref{fst_lemma} and \cref{fst_threm}, \cref{sec_threm} is easy to prove. Suppose we use $M(D) = f(D) + Z$ as a Gaussian mechanism at training step $t$ in \cref{the_alg}, then the composition of such $T$ mechanisms satisfies \cref{eq:fst_threm}. Based on \cref{fst_lemma}, if we use $M^\prime(D) = f^\prime(D) + Z^\prime$ at each training step $t$ in which $\sum_{i=1}^{m}\frac{s_i^2}{\sigma_i^2} \leq \frac{1}{\sigma_*^2}$, then $M^\prime(D)$ satisfies the same differential privacy guarantee as $M(D)$. Therefore, the proof of \cref{fst_threm} is also suitable for $M^\prime(D)$. Combine this fact with the post-processing property of DP \cite{dwork2014algorithmic}, the proof is completed.

\end{proof}

We would like to remark that \cref{sec_threm} covers the realm of small noise and high sampling ratio that the moments accountant method \cite{abadi2016deep} omits (which requires $\sigma \geq 1$ and $q < \frac{1}{16\sigma}$). For instance, if we train a model using \Alg with $L=600$ and $\sigma_*=0.9$ on a dataset of $N=60000$ examples, \cref{sec_threm} bounds the privacy cost by choosing the optimal $\alpha=6$ and implies that the model achieves $(4.0, 10^{-5})$-differential privacy after 1800 training steps.

\section{Experimental Results} \label{sec:experiments}

\subsection{Evaluation Setup}
We evaluate the privacy cost, the accuracy and the computational efficiency of \Alg compared with state-of-the-art methods: \AlgDpSgd \cite{abadi2016deep}, \textsc{AGD} \cite{lee2018concentrated} and \textsc{DpOpt} \cite{xiang2019differentially} on two real datasets: MNIST and
CIFAR-10.
We implemented all these algorithms using TensorFlow
\cite{abadi2016tensorflow} with a GTX 1080Ti GPU.

MNIST is a standard dataset for handwritten digit recognition, which consists of 60,000 training examples and 10,000 testing examples. Each example is a $28\times 28$  gray-level image. CIFAR-10 consists of 60,000 labeled examples of $32\times 32$  RGB images. There are 50,000 training images and 10,000 testing images.

The model for MNIST task first performs a 60-dimensional differentially private PCA (\textsc{Dp}PCA) projection \cite{dwork2014analyze} and then applies a single
1,000-unit ReLU hidden layer \cite{nair2010rectified}.
For the CIFAR-10 task, we use a variant of AlexNet \cite{krizhevsky2012imagenet}, which contains
two convolutional layers followed by two fully connected
layers and one softmax layer.
The first convolutional layer includes 64 filters of size $5\times 5$ with stride 1. The layer is followed by the ReLU activation, $2\times 2$
max pooling, and local response normalization. The structure of the second convolutional layer
is identical to that of the first one except that the local response normalization is performed before max pooling. The
output is then flattened into a
vector as the input for the following fully connected layer.


As outlined in \cref{the_alg}, we compute gradients
for each training example. However,
it is prohibitive to compute per-example gradients due to the parameter sharing scheme of
 convolutional layers.
Since convolutional layers are shown to be well
transferred \cite{jarrett2009best},
we pre-train the model on CIFAR-100 and initialize the network with the trained parameters. When we train it on CIFAR-10, the parameters of convolutional layers are maintained and updates happen to the fully connected layers and the softmax layers.

We use the result of \cref{sec_threm}
to calculate the privacy cost. Specifically, given $\epsilon$, $\sigma_*$, and $q$, at each iteration, we select $\alpha$'s from $\{2, 3,\dots, 64\}$ and determine the smallest $\delta_*$ that satisfies \eqref{threm_cond} in Theorem \ref{sec_threm}. The privacy cost is the pair $(\epsilon, \delta_*)$. 

For \Alg, we set $\gamma = 0.1$, $\gamma^\prime = 0.9$ and $G=10^{-6}$ in all experiments. For \textsc{AGD}, since it was only evaluated on shallow machine learning tasks in \cite{lee2018concentrated}, its privacy computation is not suitable for deep learning as it does not consider the privacy amplification due to subsampling. Therefore, we use \cref{sec_threm} to compute the privacy guarantee for \textsc{AGD} to give a fair comparison. To implement \textsc{DpOpt}\footnote{In the equation that precedes (10a) in~\cite{xiang2019differentially}, the quadratic term in the moment generating function of a Gaussian distribution was missing. This equation should be $\sum_{i}^{d}\frac{\Delta_i^2}{\sigma_i^2} \leq \frac{2\epsilon}{1 + \lambda} + \frac{2}{\lambda + \lambda^2}\ln{\delta}$.}, we use the projected gradient descent algorithm and optimize the objective function with 50 steps as the same as \cite{xiang2019differentially}.

\subsection{Privacy Cost} \label{subsec:privacy_cost}
\begin{table}[!ht]
\centering
	\begin{tabular}[t]{ccccc}
		\toprule
		Dataset & Accuracy& $\delta_*$& $\epsilon_D$ & $\epsilon_A$ \\
		\midrule
		\multirow{4}{*}{MNIST}&0.92       & \multirow{4}{*}{$10^{-4}$}  & 1.10       & 0.75 \\
		&0.94       &  & 1.61       & 0.78 \\
		&0.95       &   & 1.96       & 0.80  \\
		&0.96       &   & 5.48       & 1.40  \\
		\midrule
		\multirow{4}{*}{CIFAR-10}&0.68       & \multirow{4}{*}{$10^{-4}$}  & 0.89       & 0.62 \\
		&0.69       &  & 1.85       & 1.07 \\
		&0.70       &   & 4.75       & 2.13  \\
		&0.71       &   & 8.36      & 3.55  \\
		\bottomrule
	\end{tabular}
	\caption{Results on the privacy cost of \AlgDpSgd and \textsc{AdaDp}}
	\label{the_tab}
\end{table}

To illustrate the trade-off between the privacy cost and the accuracy,
we measured the privacy cost of \Alg and \AlgDpSgd to attain a pre-specified accuracy level. 
We set the noise level $\sigma_* = 3.0$ and $ \sigma_{p}=6.0$ on MNIST and $\sigma_* = 4.0$ on CIFAR-10, where $\sigma_p$ is the noise level for \textsc{DpPCA} \cite{dwork2014analyze}. 

\emph{Compared with \AlgDpSgd (as a representative method with a non-adaptive learning rate), \Alg achieves an average reduction of 54\% and 46\% in privacy cost on MNIST and CIFAR-10 respectively.} \cref{the_tab} summarizes the results, where $\epsilon_D$ and $\epsilon_A$ denote the minimum privacy cost required by \AlgDpSgd and \Alg, respectively, to attain the pre-specified accuracy level. We observe that \Alg always requires much lower privacy cost than \AlgDpSgd to achieve the same accuracy level.
This is mainly due to the faster convergence and fewer training steps of \Alg.

\subsection{Accuracy}

\begin{figure*}[!ht]
	\begin{subfigure}[b]{0.32\textwidth}
		\includegraphics[width=\textwidth]{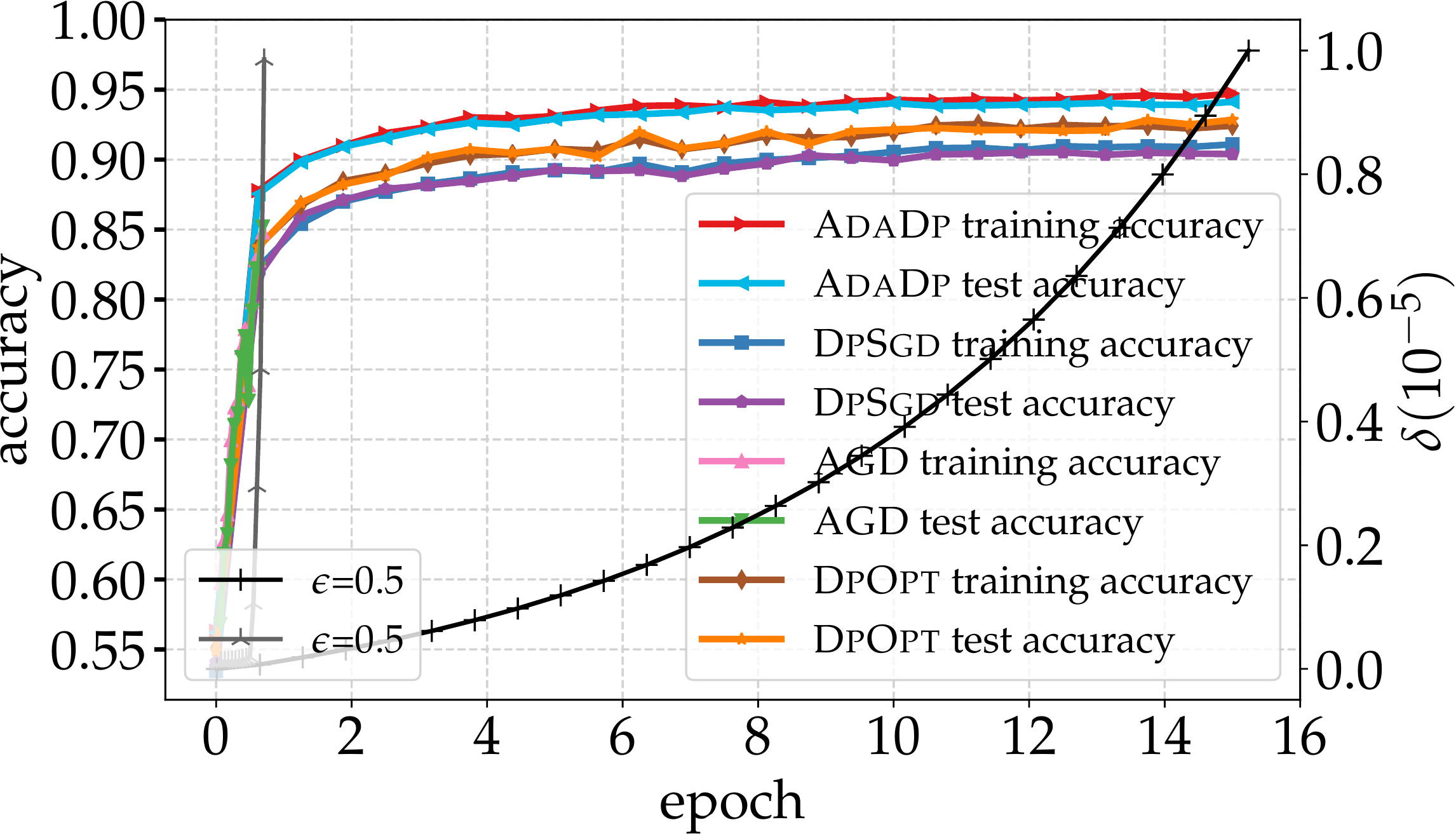}
		\caption{high privacy level}\label{fig:mnist_acc_large_noise}
	\end{subfigure}
	~
	\begin{subfigure}[b]{0.32\textwidth}
		\includegraphics[width=\textwidth]{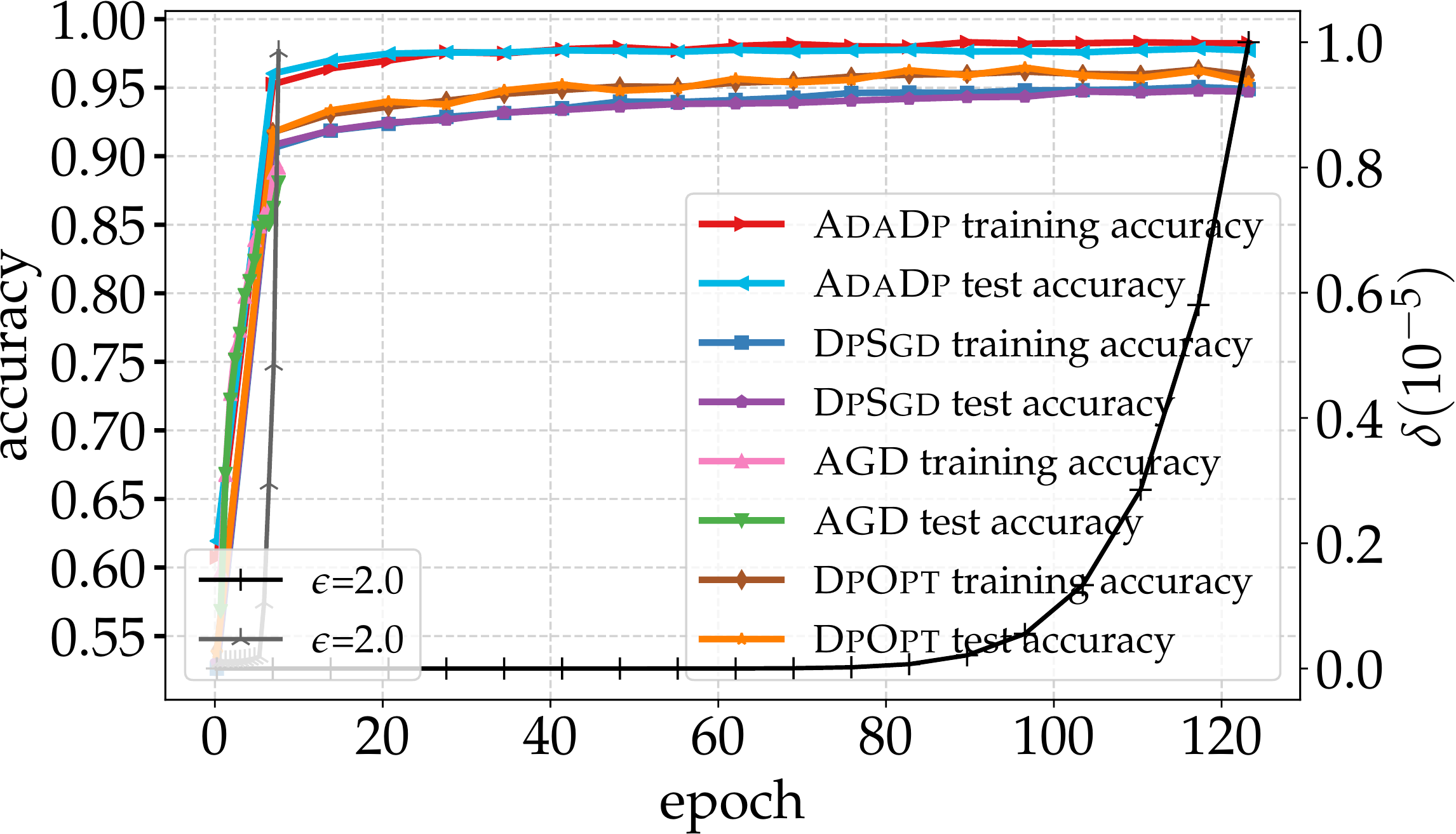}
		\caption{medium privacy level}\label{fig:mnist_acc_median_noise}
	\end{subfigure}
	~
	\begin{subfigure}[b]{0.32\textwidth}
		\includegraphics[width=\textwidth]{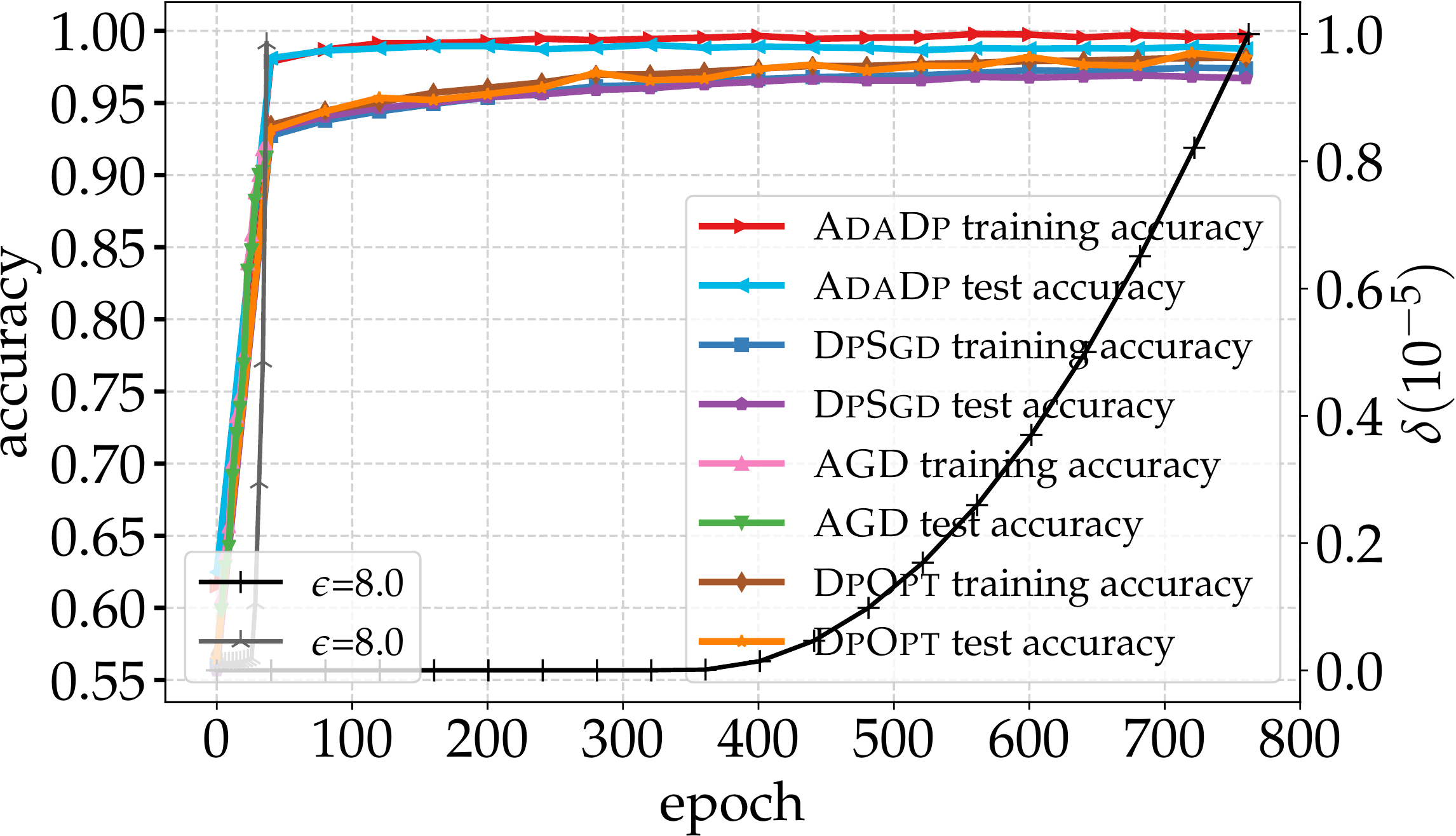}
		\caption{low privacy level}\label{fig:mnist_acc_small_noise}
	\end{subfigure}

	\caption{Results on the accuracy for different privacy levels on MNIST dataset.}\label{fig:mnist_acc}
\end{figure*}
\begin{figure*}[!ht]
	\begin{subfigure}[b]{0.32\textwidth}
		\includegraphics[width=\textwidth]{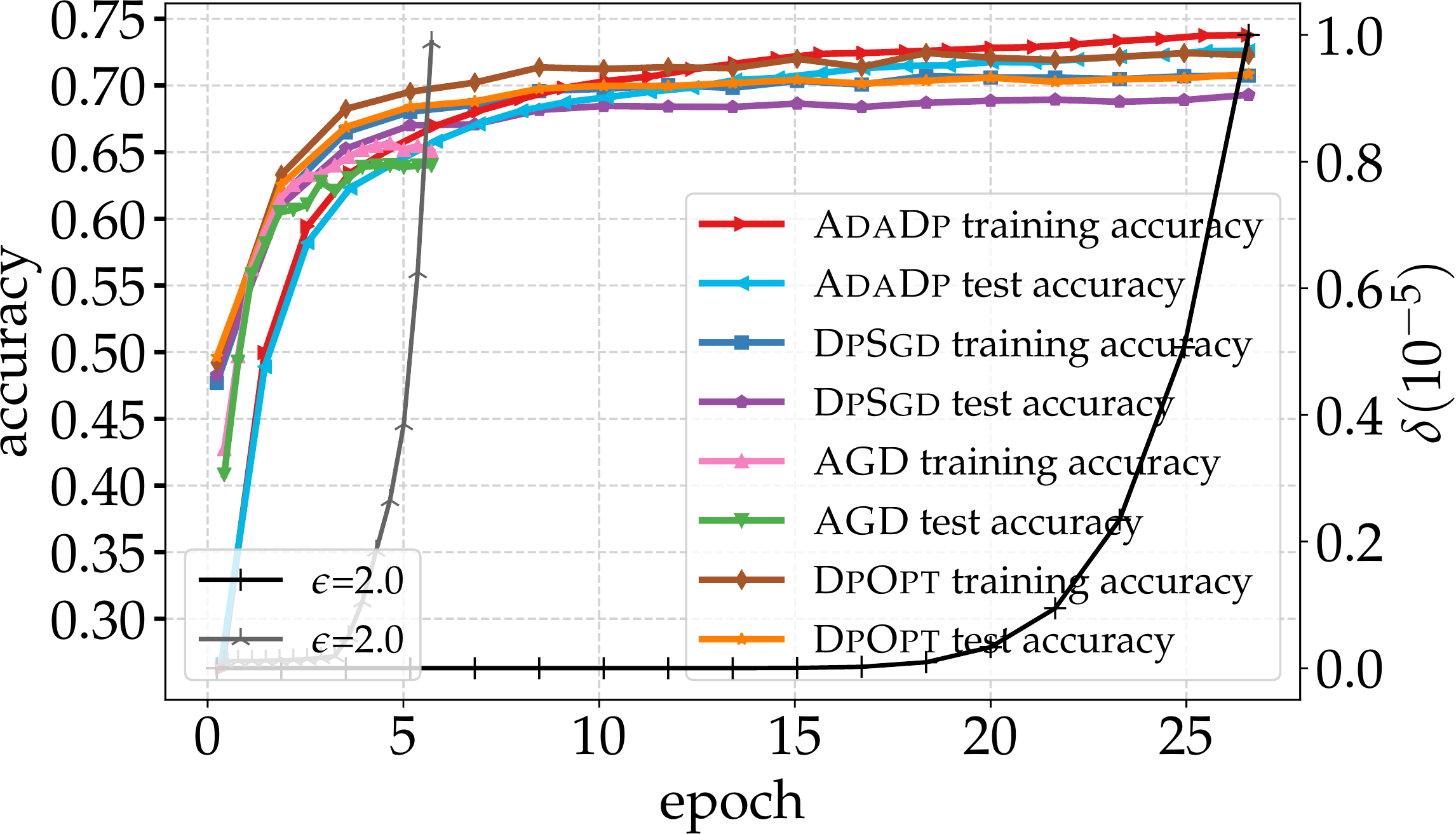}
		\caption{high privacy level}
		\label{fig:cifar_a}
	\end{subfigure}
	~
	\begin{subfigure}[b]{0.32\textwidth}
		\includegraphics[width=\textwidth]{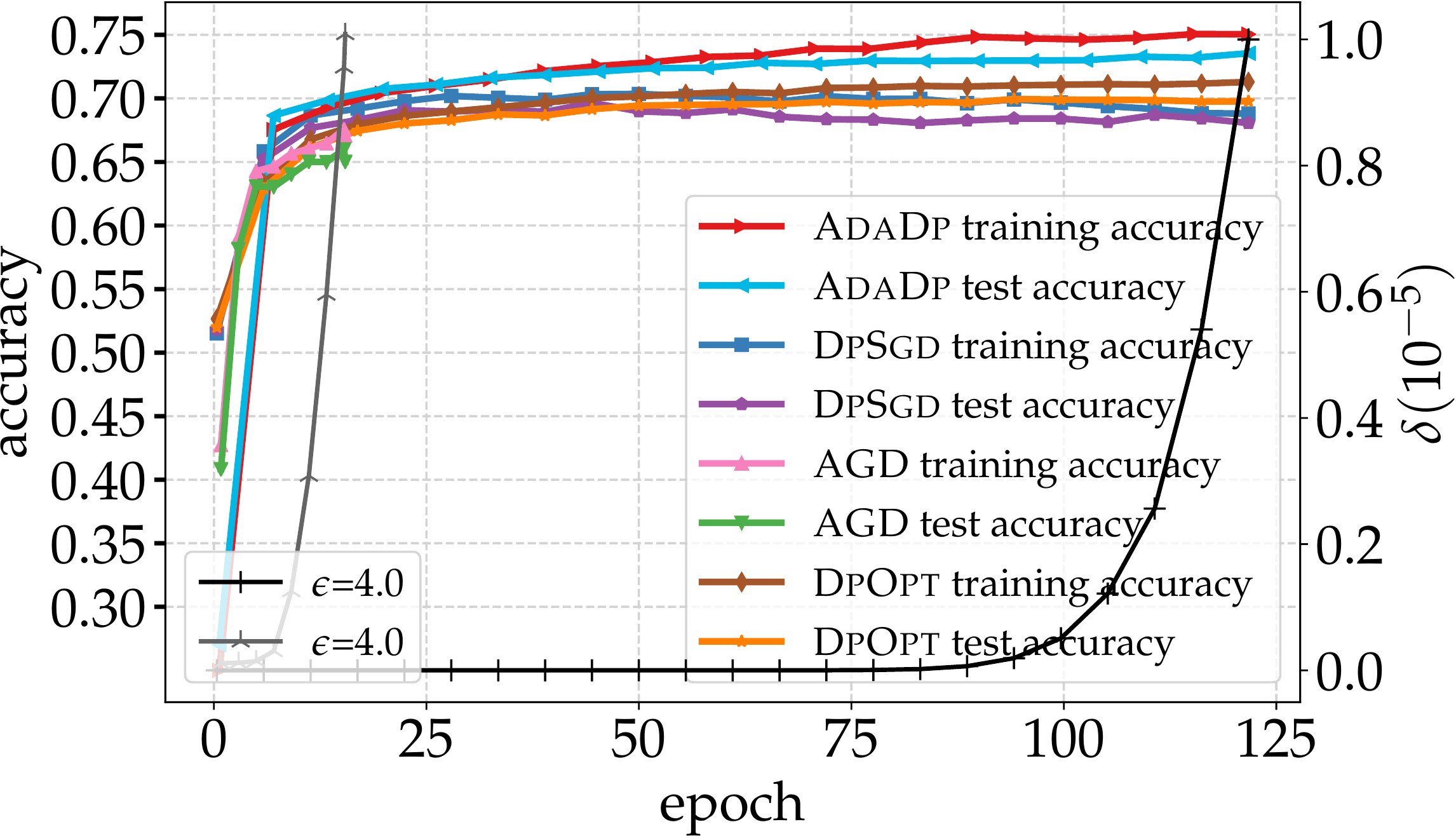}
		\caption{medium privacy level}
		\label{fig:cifar_b}
	\end{subfigure}
	~
	\begin{subfigure}[b]{0.32\textwidth}
		\includegraphics[width=\textwidth]{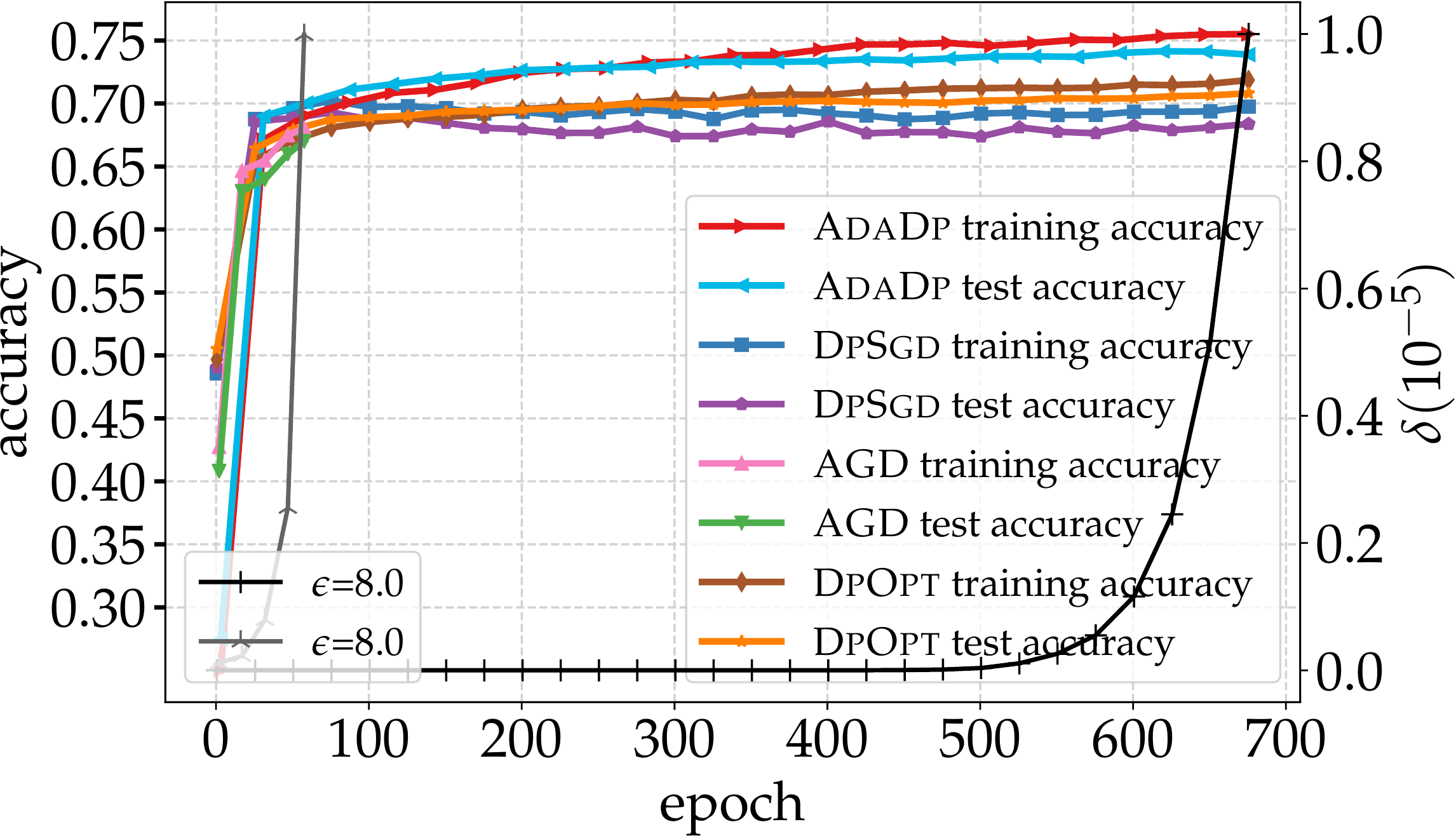}
		\caption{low privacy level}
		\label{fig:cifar_c}
	\end{subfigure}

	\caption{Results on the accuracy for different privacy levels on CIFAR-10 dataset.}
\label{fig:cifar10}
\end{figure*}

Furthermore, given a fixed privacy budget, we evaluate the accuracy of \Alg, \AlgDpSgd, \textsc{AGD} and \textsc{DpOpt} on both MNIST and CIFAR-10 datasets under the high privacy level, the medium privacy level and the low privacy level respectively. In all experiments on MNIST, we set the lot size $L=600$ and $\ell^2$-norm clipping bound $C=4.0$. For \Alg, we set the local clipping factor $\beta=1.2$. The noise levels $(\sigma_*, \sigma_{p})$ for training the fully connected layer and the \textsc{Dp}PCA layer are set to (8, 16), (4, 7), and (2, 4), respectively, for the three experiments. When testing \AlgDpSgd and \textsc{DpOpt}, we set the initial learning rate to 0.1 and linearly drop it down to 0.052 after 10 epochs and then keep it constant. When applying \Alg, we set the learning rate $\eta$ to 0.002 and keep it unchanged. On CIFAR-10, we set the noise level $\sigma_* = 6.0$, lot size $L=2000$, $\ell^2$-norm clipping bound $C=3.0$ for fully connected layers and $C=6.0$ for the softmax layer. For \Alg, we set the local clipping factor $\beta=1.2$. For \AlgDpSgd and \textsc{DpOpt}, we set the initial learning rate to 0.1 and apply an exponentially decaying schedule in the first experiment and set the learning rate to 0.052 in the last two experiments. For \Alg, the learning rate $\eta$ is set to 0.001.


\emph{The accuracy achieved by \Alg significantly outperforms that of \AlgDpSgd, \textsc{AGD} and \textsc{DpOpt} under all three privacy levels on both MNIST and CIFAR-10 datasets.}  \cref{fig:mnist_acc} illustrates how the accuracy varies with the number of epochs and $\delta_*$ on MNIST. The gray line and the black line (please refer to the right vertical axis) denote the accumulating privacy cost for \textsc{AGD} and other methods respectively in terms of $\delta_*$ given a fixed $\epsilon$. We observe that the final test accuracy of \Alg on the MNIST achieves an increase of $5.9\%$, $2.7\%$ and $1.0\%$ respectively compared with \AlgDpSgd, $8.8\%$, $9.5\%$ and $6.7\%$ respectively compared with \textsc{AGD}, $4.3\%$, $1.5\%$ and $0.7\%$ respectively compared with \textsc{DpOpt}. 
The results on CIFAR-10 are shown in \cref{fig:cifar10}. In all three settings, \Alg achieves an accuracy increase of 3.2\%, 5.3\% and 4.8\% respectively compared with \AlgDpSgd, $8.5\%$, $8.0\%$ and $6.9\%$ respectively compared with \textsc{AGD}, $2.1\%$, $3.5\%$ and $3.2\%$ respectively compared with \textsc{DpOpt}. 

We now analyze how we achieve higher accuracy than each state-of-the-art method respectively. Compared with \AlgDpSgd, the performance improvement of \Alg is achieved by both the adaptive learning rate and the adaptive noise. As to \textsc{AGD}, $\sigma$ in this algorithm can only decrease gradually which cannot be reversed such that the privacy budget is consumed too fast. Generally, a few epochs of training are not enough to guarantee the convergence for deep learning models. Therefore, \textsc{AGD} performs poorly in deep learning. For \textsc{DpOpt}, on the one hand, \Alg adopts an adaptive learning rate to improve the convergence. On the other hand, as our first numeric experiment indicates, \Alg samples noise with higher variance for dimensions of the gradient with larger sensitivity which mitigates the influence of noise significantly. In addition, the expected variance of each noise distribution will gradually decrease as the training progresses.

\subsection{Computational Efficiency}
\begin{table}[!ht]
\centering
	\begin{tabular}[t]{ccccc}
		\toprule
		Dataset & \Alg& \AlgDpSgd & \textsc{AGD} & \textsc{DpOpt} \\
		\midrule
		MNIST & 0.5s & 0.3s & 12.0s & 15.5s \\
		\midrule
		CIFAR-10 & 1.2s & 0.9s & 23.7s & 19.1s \\
		\bottomrule
	\end{tabular}
	\caption{Results on the computational efficiency.}
	\label{the_tab_2}
\end{table}
In addition, we evaluate the average processing time of \Alg, \AlgDpSgd, \textsc{AGD} and \textsc{DpOpt} per iteration, as a measure of computational efficiency. The time involves processing a whole batch where we set the batch size as 120 on MNIST and 32 on CIFAR-10 respectively. Other settings are the same as \cref{fig:mnist_acc_large_noise} and \cref{fig:cifar_a} respectively.

\emph{\Alg is much more computationally efficient than both \textsc{AGD} and \textsc{DpOpt}.} The results are shown in \cref{the_tab_2}. We observe that for each iteration, the average processing time of \Alg is close to that of \AlgDpSgd which is far shorter than the other two algorithms. Note that \textsc{AGD} repeatedly evaluates the model multiple times to obtain the best update step size while \textsc{DpOpt} needs to solve a non-convex optimization problem with variables as many as the parameters of the deep learning model. In contrast to \textsc{AGD} and \textsc{DpOpt}, \Alg adopts a heuristic to avoid heavy computation at each iteration.

\subsection{Micro Benchmarks}

\begin{figure*}[!ht]
	\centering
	\begin{subfigure}[b]{0.23\textwidth}
		\includegraphics[width=\textwidth, height=0.115\textheight]{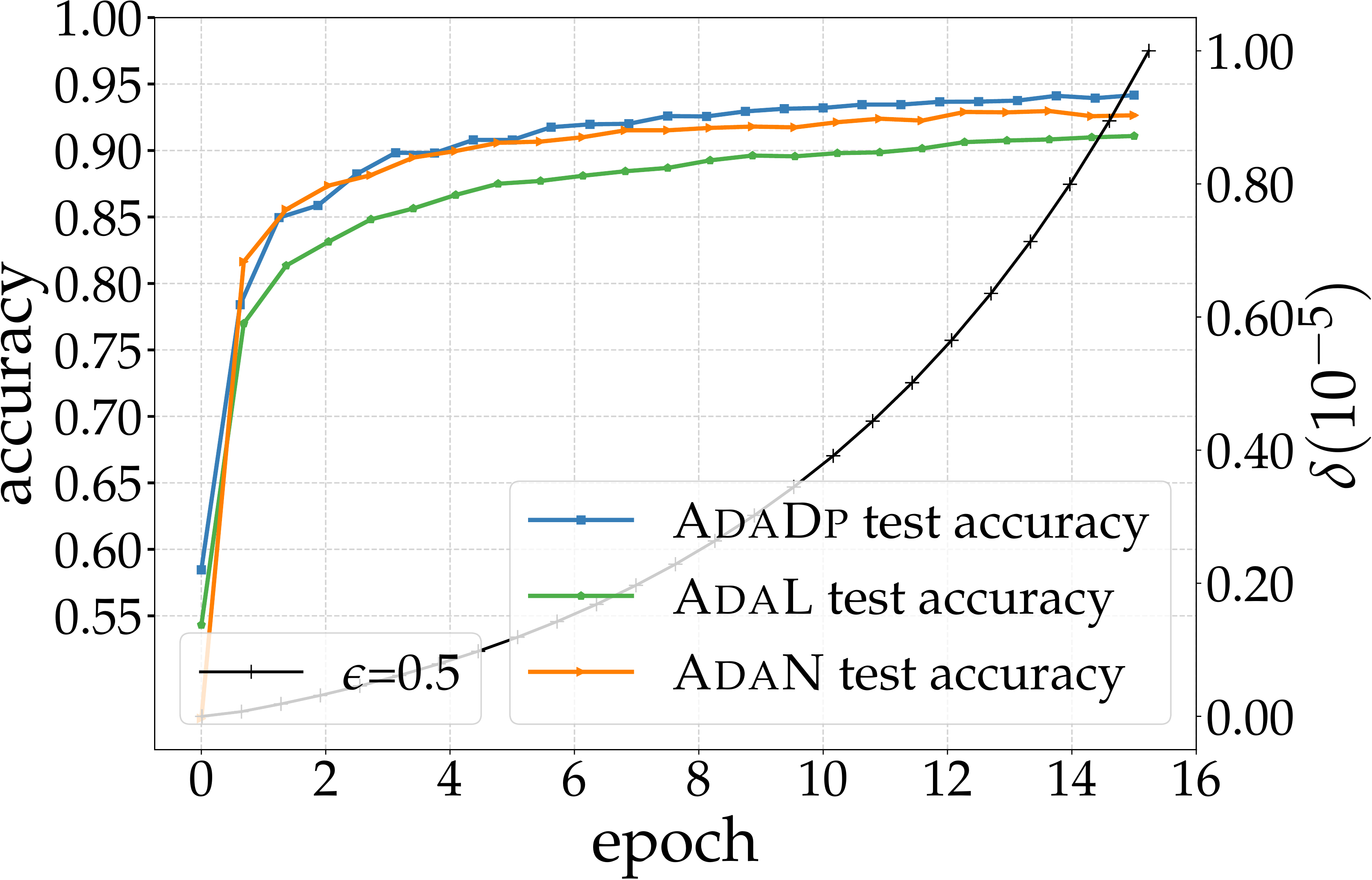}
		\caption{Results on the test accuracy on MNIST dataset.}
		\label{fig:ind_mnist_acc_large_noise}
	\end{subfigure}
	~
	\begin{subfigure}[b]{0.23\textwidth}
		\includegraphics[width=\textwidth, height=0.115\textheight]{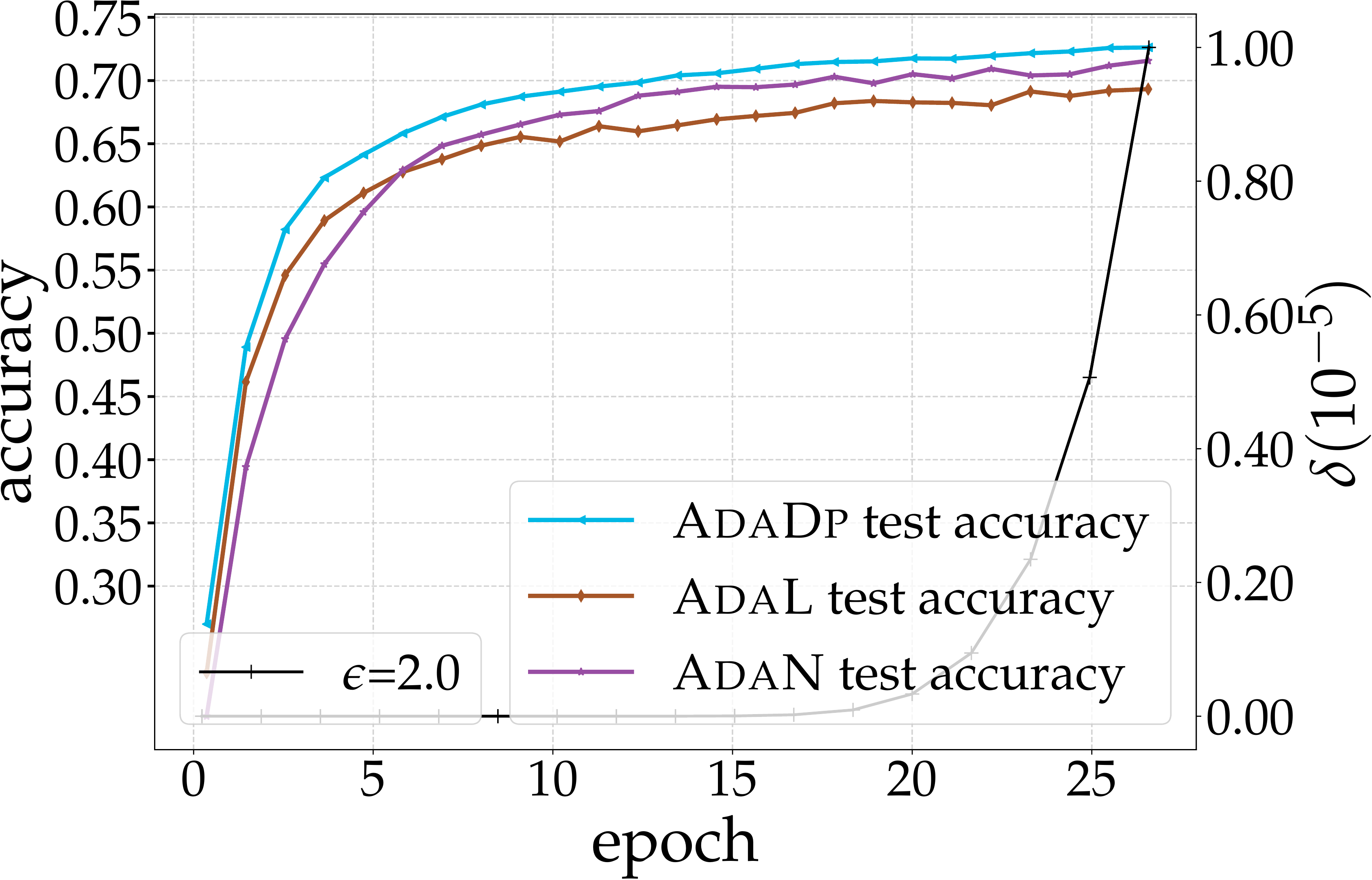}
		\caption{Results on the test accuracy on CIFAR-10 dataset.}
		\label{fig:ind_cifar_acc_large_noise}
	\end{subfigure}
	~
	\begin{subfigure}[b]{0.23\textwidth}
		\includegraphics[width=\textwidth, height=0.115\textheight]{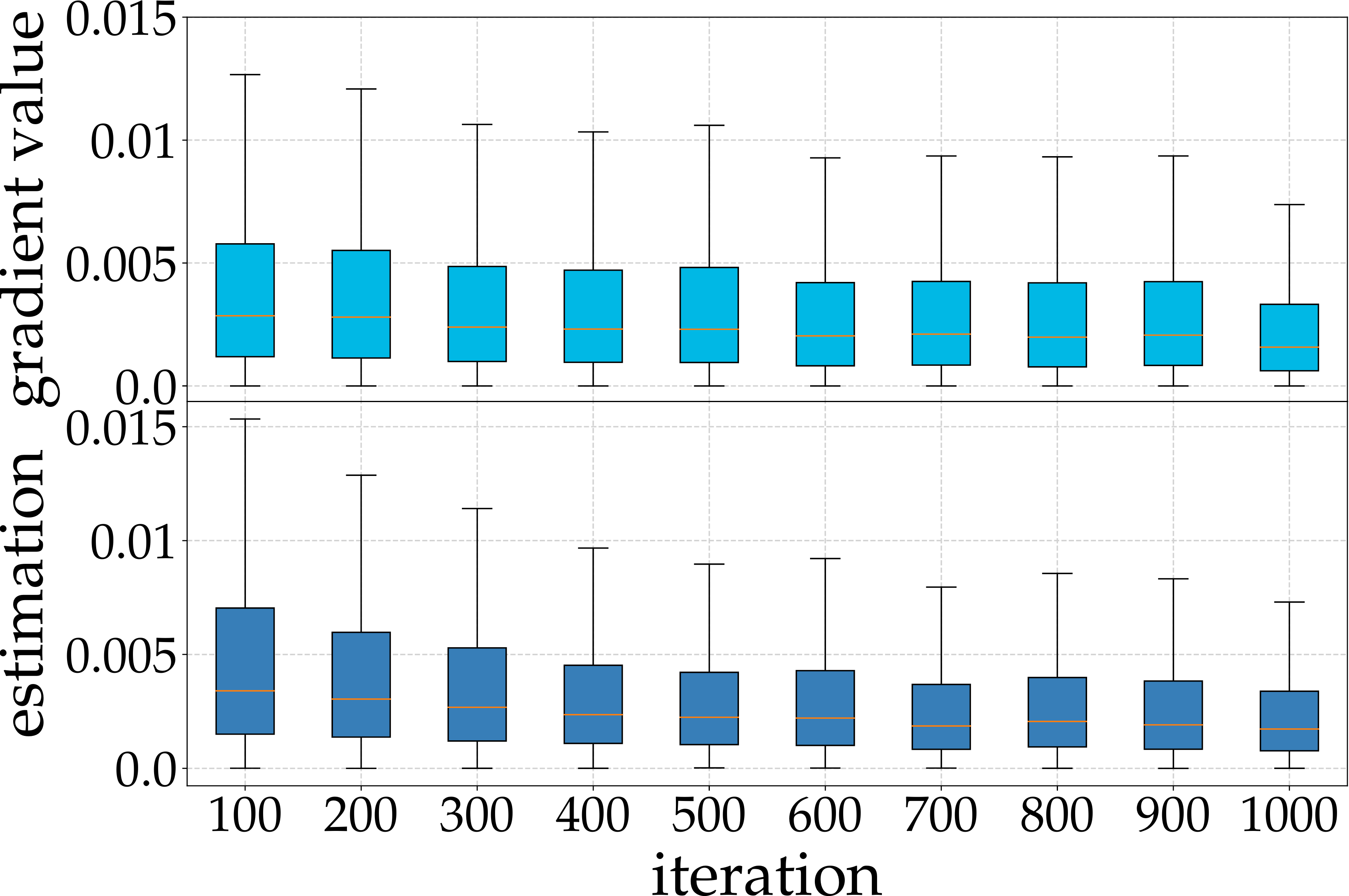}
		\caption{$|g_t|$ and $\sqrt{\expect^\prime[g^2]_{t-1}}$ at different iterations.}
		\label{fig:est_mnist}
	\end{subfigure}
	~
	\begin{subfigure}[b]{0.23\textwidth}
		\includegraphics[width=\textwidth, height=0.115\textheight]{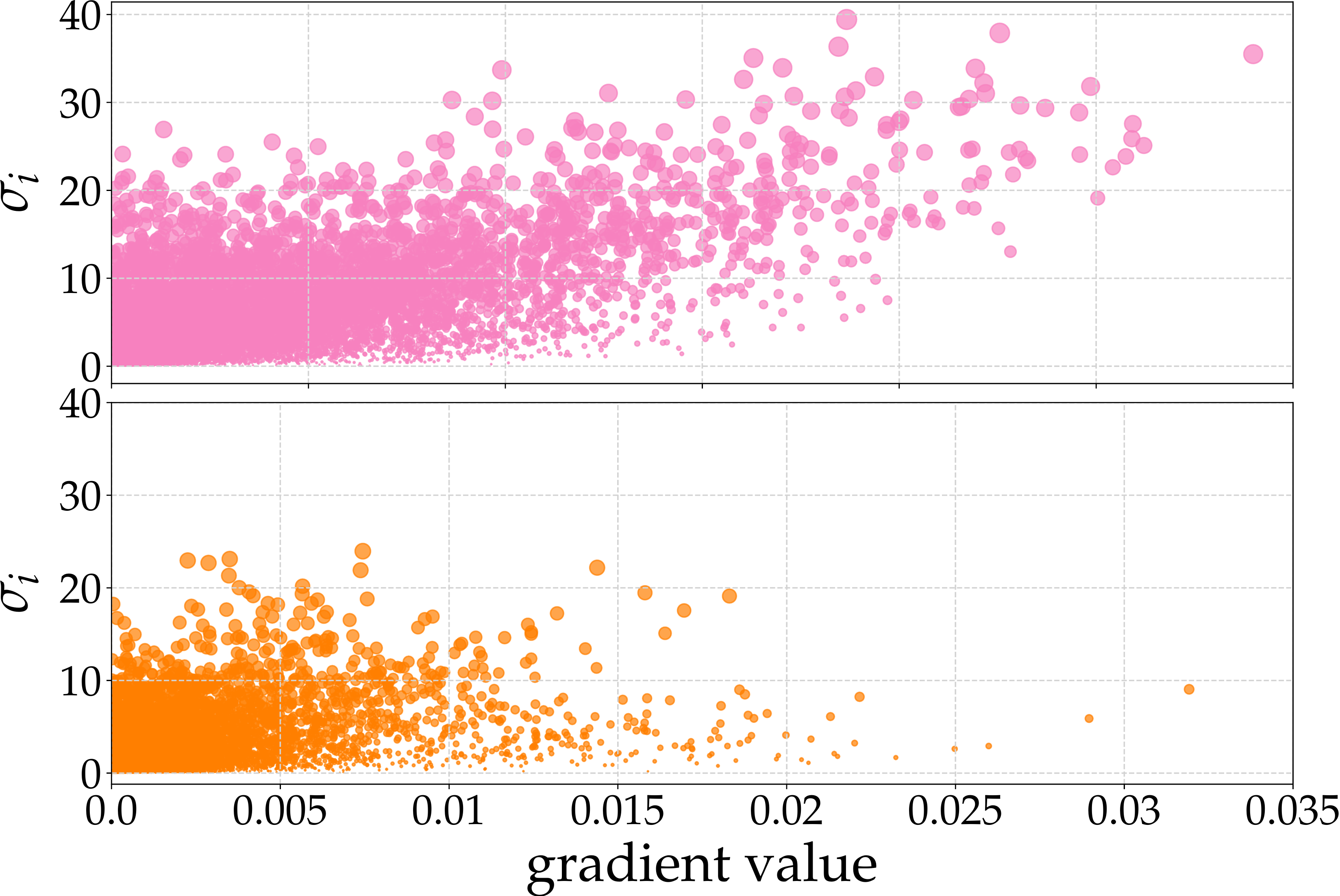}
		\caption{The distribution of $\sigma_i$ at different iterations.}
		\label{fig:sigma_mnist}
	\end{subfigure}
	\caption{The micro benchmarks of key components in \Alg.}
	\label{fig:abl_study}
\end{figure*}

\begin{figure*}[!ht]
	\centering
	\begin{subfigure}[b]{0.23\textwidth}
		\includegraphics[height=0.55\textwidth,width=\textwidth]{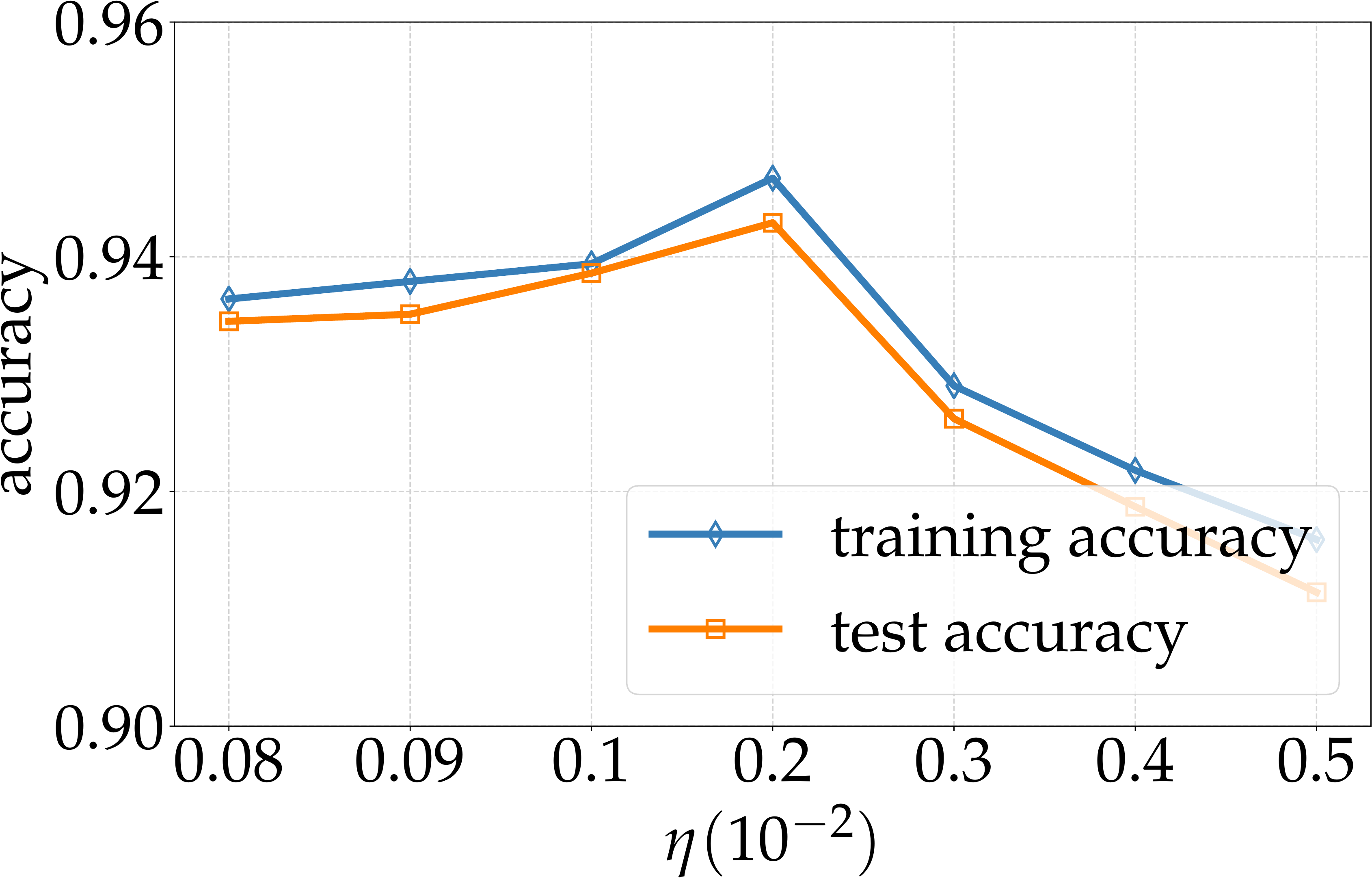}
		\caption{learning rate}\label{fig:mnist_parameters_lr}
	\end{subfigure}
	~
	\begin{subfigure}[b]{0.23\textwidth}
		\includegraphics[height=0.55\textwidth,width=\textwidth]{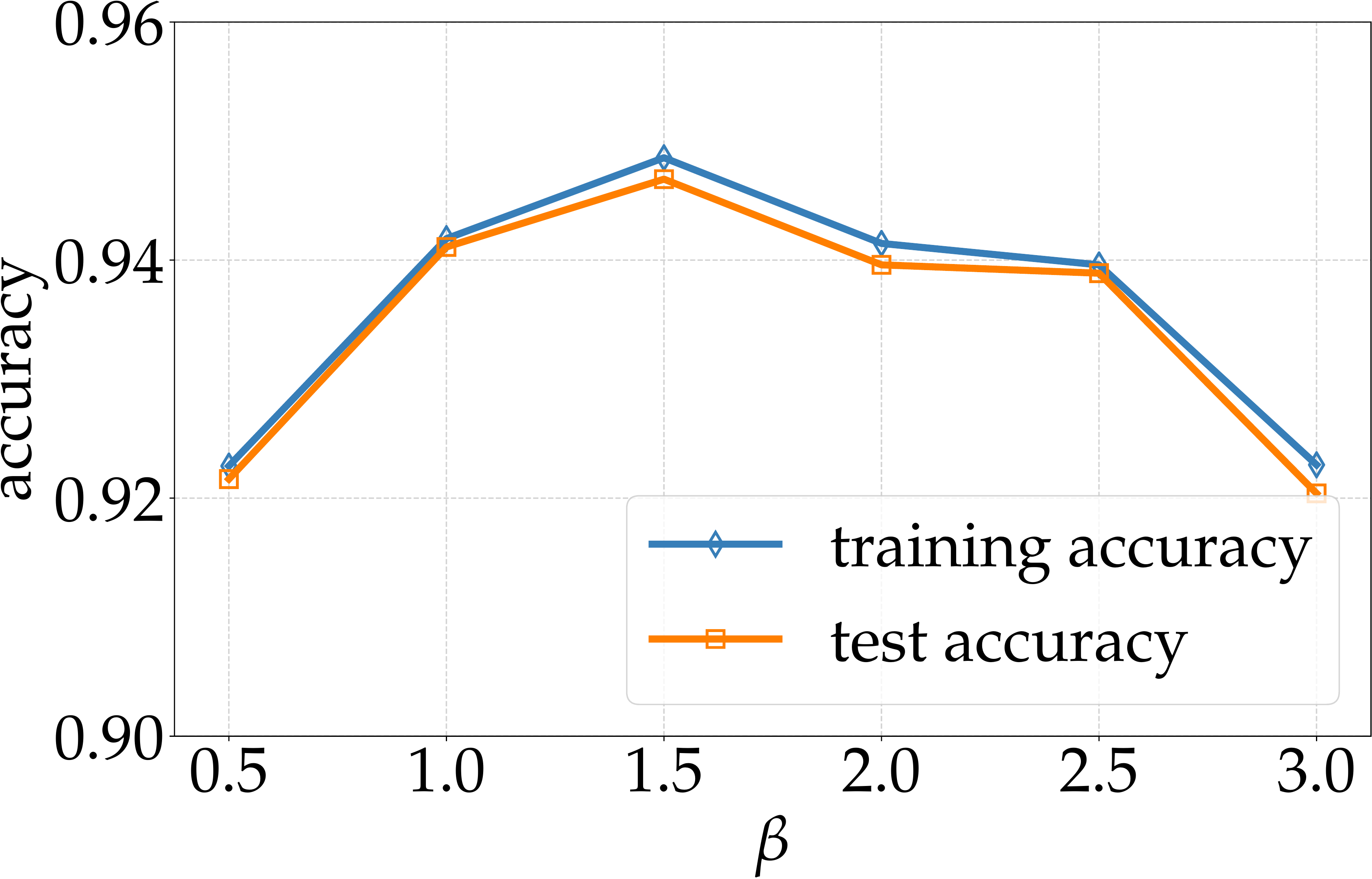}
		\caption{local clipping factor}\label{fig:mnist_parameters_beta}
	\end{subfigure}
	~
	\begin{subfigure}[b]{0.23\textwidth}
		\includegraphics[height=0.555\textwidth,width=\textwidth]{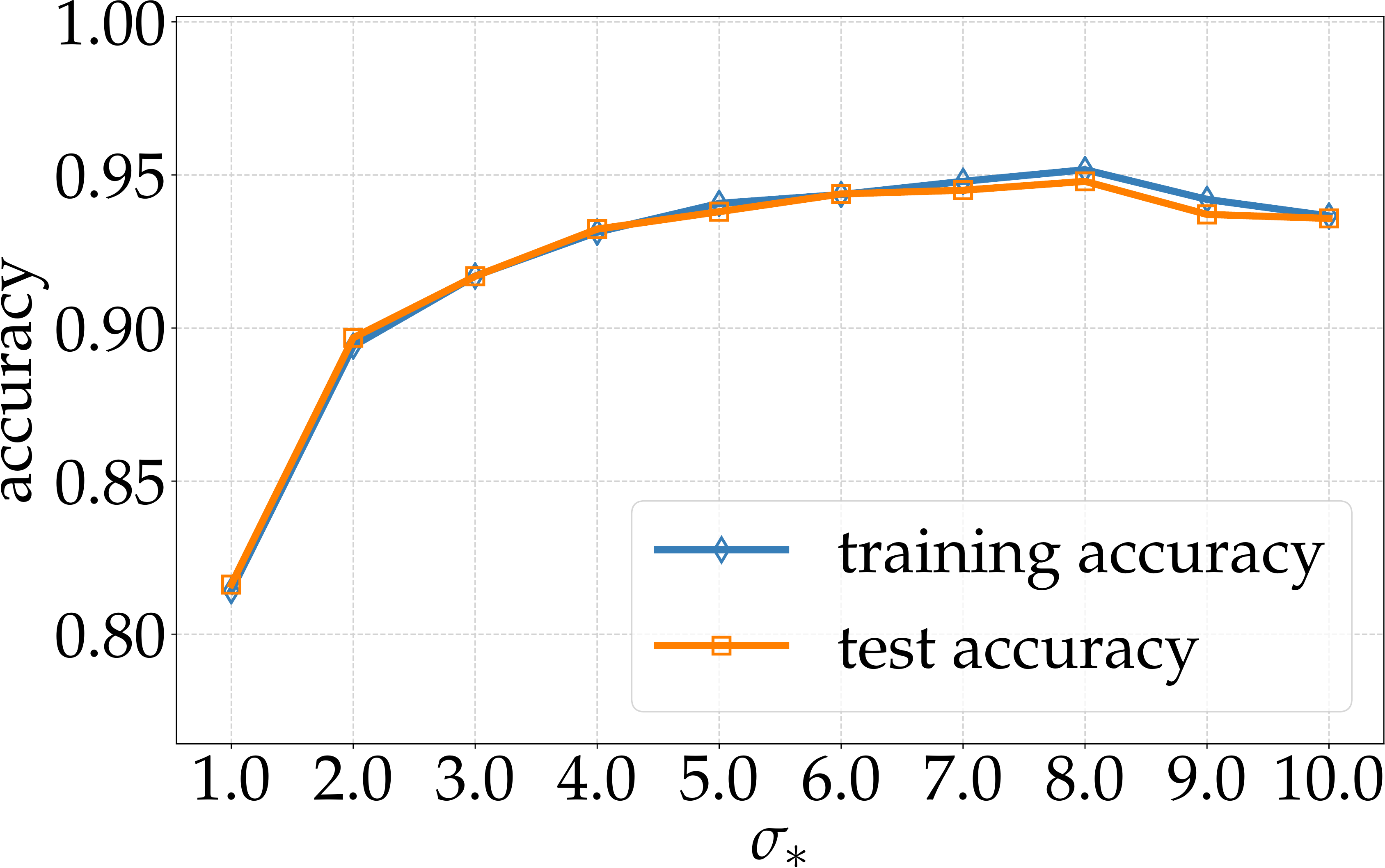}
		\caption{noise level}\label{fig:mnist_parameters_sigma}
	\end{subfigure}
	~
	\begin{subfigure}[b]{0.23\textwidth}
		\includegraphics[height=0.55\textwidth,width=\textwidth]{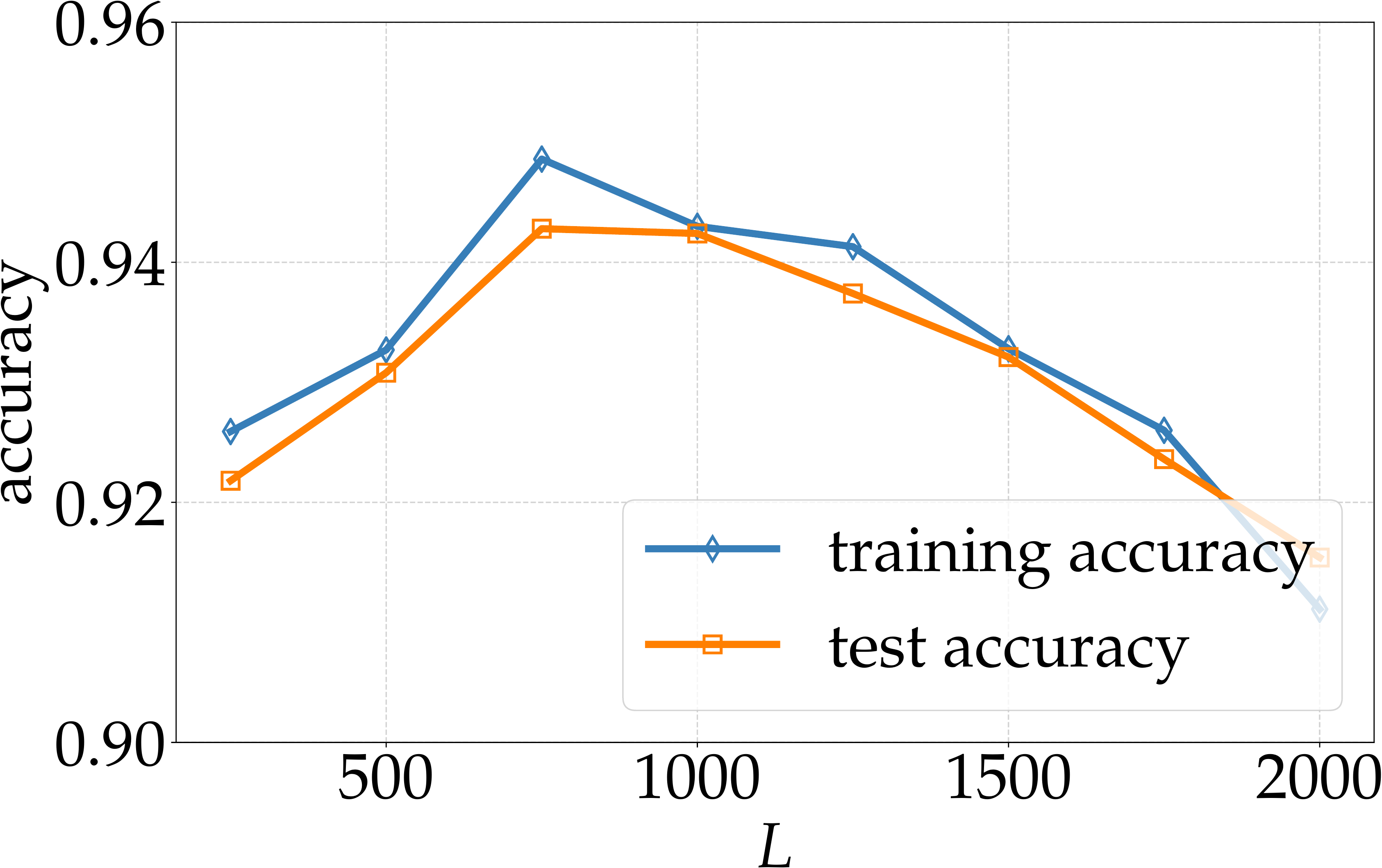}
		\caption{lot size}\label{fig:mnist_parameters_lot}
	\end{subfigure}

	\caption{The sensitivity of different parameters on the accuracy of \Alg.}\label{fig:mnist_sensitivity}
\end{figure*}


Considering there are two components in \Alg, namely the adaptive learning rate and the adaptive noise, we study their independent contribution to the final performance in this subsection. We call the methods with only one component $\textsc{AdaL}$ (only adaptive learning rate) and $\textsc{AdaN}$ (only adaptive noise) respectively. $\textsc{AdaL}$ uses global clipping method to clip the original gradient and samples noise for each dimension from the same Gaussian distribution. The only difference between $\textsc{AdaL}$ and $\textsc{DpSgd}$ is that $\textsc{AdaL}$ adjusts the learning rate based on \cref{eq:rmsprop}. As to $\textsc{AdaN}$, we implement it directly by removing the adaptive learning rate part in \Alg. For the experiment on MNIST, the learning rate is set to 0.1 and 0.001 for $\textsc{AdaN}$ and $\textsc{AdaL}$ respectively. Other settings are the same as \cref{fig:mnist_acc_large_noise}. On CIFAR-10, the learning rate is set to 0.05 and 0.001 for $\textsc{AdaN}$ and $\textsc{AdaL}$ respectively and other settings are the same as \cref{fig:cifar_a}.

\xu{\emph{Both adaptive components contribute to the performance gain of \Alg while the adaptive noise component contributes more.} As illustrated in \cref{fig:ind_mnist_acc_large_noise} and \cref{fig:ind_cifar_acc_large_noise},} we observe that $\textsc{AdaN}$ achieves a higher accuracy than $\textsc{AdaL}$, showing that, compared with the adaptive learning rate, the adaptive noise component has a more significant impact on the performance of \Alg.

Since \Alg depends on whether the estimation of $\sqrt{\expect^\prime[g^2]_{t-1}}$ is accurate for $|g_{t}|$, we further conducted another experiment to verify its effectiveness under the settings as the same as \cref{fig:mnist_acc_large_noise}.

\emph{The estimation given by $\sqrt{\expect^\prime[g^2]_{t-1}}$ is accurate and can reflect the changing trend of  $|g_{t}|$.} The results are shown in \cref{fig:est_mnist}, from which we observe that the distribution of $\sqrt{\expect^\prime[g^2]_{t-1}}$ (the figure below) is close to that of $|g_{t}|$ (the figure above) at each iteration. Also, from the $100$\textsuperscript{th} iteration to the $1000$\textsuperscript{th} iteration, both the gradient value and the estimated value show a declining trend.

Additionally, as aforementioned, given the expected sensitivity of the gradient decreases with the training progresses, each $\sigma_i$ will also decrease gradually. We select the $\sigma_i$ in \Alg at the $100$\textsuperscript{th} iteration and the $1000$\textsuperscript{th} iteration respectively to illustrate this property. All the settings are the same as \cref{fig:mnist_acc_large_noise}.

\xu{\emph{The standard deviation of most noise distributions for each coordinate of the gradient gradually decreases as the training progresses.}} As shown in \cref{fig:sigma_mnist} (the area of each circle is proportional to the value of $\sigma_i$), we observe that most $\sigma_i$ at the $1000$\textsuperscript{th} iteration (the figure below) are more concentrated in the range less than 10 while there are many $\sigma_i$ distributed from 10 to 20 at the $100$\textsuperscript{th} iteration (the figure above). Therefore, the noise distributions in \Alg are adaptive not only to each dimension of the gradient, but also to different iterations during training.

\subsection{Impact of Parameters}
In this set of experiments on the MNIST dataset, we study
 the impact of the learning rate, the local clipping factor, the noise level and the lot size on the accuracy. In all experiments, if not specified, we set the learning rate $\eta=0.002$, lot size $L=600$, local clipping factor $\beta=1.2$, $\ell^2$-norm clipping bound $C=4.0$, noise level $\sigma_*=8.0, \sigma_{p}=16.0$, and privacy level $\epsilon=0.5, \delta=1e^{-5}$.

\textbf{Learning Rate.} As shown in \cref{fig:mnist_parameters_lr}, the accuracy stays consistently above 93\%, irrespective of the learning rate ranging from  $0.08\times10^{-2}$ to $0.5\times10^{-2}$. When the learning rate is lower than  $2\times10^{-2}$, the accuracy increases with the learning rate. When it is higher than  $2\times10^{-2}$, a higher learning rate results in a lower accuracy.

\textbf{Local Clipping Factor.} The local clipping factor $\beta$ controls the scale of each dimension of the gradient. Note that the estimation $\sqrt{\expect[^\prime[g^2]_{t-1}]}$ is not equal to $|g_t|$, a smaller $\beta$ drops more information of the gradient since some dimensions will be clipped as ${s_i}$ or $-{s_i}$. \cref{fig:mnist_parameters_beta} shows that the performance of \Alg is relatively resistant to the local clipping factor and a reasonable setting for $\beta$ is to take a value slightly greater than one. However, due to the adaptive noise scheme of \Alg, a large value of $\beta$ will not raise the noise level drastically (especially because when the gradient is close to 0 as the training converges, $s_i$ will also be close to 0 whatever $\beta$ takes). 

\textbf{Noise Scale.} The noise scale determines the amount of Gaussian noise added to the update term at each step. Although a smaller noise scale mitigates the effect of noises, it results in fewer training steps and the model is hard to converge. Meanwhile, setting a larger noise scale allows more training steps, but excessive noise will ruin the original gradient. Results achieved with different noise scales are shown in \cref{fig:mnist_parameters_sigma}, where our model attains
the most superior performance with $\sigma_*=7.0$. Compared with \AlgDpSgd,
which reaches the highest accuracy at $\sigma_*=4.0$ \cite{abadi2016deep}, we
attribute the much larger value of $\sigma_*$ of \Alg
to adaptive noises which alleviate the impact of the differential privacy mechanism and a larger noise scale yields more training steps.

\textbf{Lot Size.} The lot size controls the sampling ratio. A large lot size yields a higher sampling ratio and reduces the number of training steps. If the noise intensity is fixed, a smaller lot size results in a more notable effect of
 Gaussian noise. \cref{fig:mnist_parameters_lot} shows the accuracy vs.\
the lot size. It can be observed that as the lot size grows, the performance increases initially, peaks at $L=800$, and finally declines. The result agrees with the above analysis that too high too low sampling ratio will incur performance degradation.



\section{Conclusions}
\label{sec:Conclusion}
In this paper, our key contribution is three-fold. First, we propose a differentially private deep learning algorithm which leads to a faster convergence and higher accuracy in comparison with prior methods. Second, we intuitively analyze advantages of \Alg over \AlgDpSgd and mathematically prove that \Alg satisfies differential privacy by more advanced analytic tools. Third, we applied \Alg, \AlgDpSgd, \textsc{AGD} and \textsc{DpOpt} for model training of deep learning networks with real-world datasets and experimentally evaluate the better performance of \Alg.



\begin{thebibliography}{10}
\providecommand{\url}[1]{#1}
\csname url@samestyle\endcsname
\providecommand{\newblock}{\relax}
\providecommand{\bibinfo}[2]{#2}
\providecommand{\BIBentrySTDinterwordspacing}{\spaceskip=0pt\relax}
\providecommand{\BIBentryALTinterwordstretchfactor}{4}
\providecommand{\BIBentryALTinterwordspacing}{\spaceskip=\fontdimen2\font plus
\BIBentryALTinterwordstretchfactor\fontdimen3\font minus
  \fontdimen4\font\relax}
\providecommand{\BIBforeignlanguage}[2]{{%
\expandafter\ifx\csname l@#1\endcsname\relax
\typeout{** WARNING: IEEEtran.bst: No hyphenation pattern has been}%
\typeout{** loaded for the language `#1'. Using the pattern for}%
\typeout{** the default language instead.}%
\else
\language=\csname l@#1\endcsname
\fi
#2}}
\providecommand{\BIBdecl}{\relax}
\BIBdecl

\bibitem{wang2016csi}
X.~Wang, L.~Gao, and S.~Mao, ``Csi phase fingerprinting for indoor localization
  with a deep learning approach,'' \emph{IEEE Internet of Things Journal},
  vol.~3, no.~6, pp. 1113--1123, 2016.

\bibitem{wang2017spatiotemporal}
J.~Wang, J.~Tang, Z.~Xu, Y.~Wang, G.~Xue, X.~Zhang, and D.~Yang,
  ``Spatiotemporal modeling and prediction in cellular networks: A big data
  enabled deep learning approach,'' in \emph{Proceddings of 36th Annual IEEE
  International Conference on Computer Communications (INFOCOM)}, 2017, pp.
  1--9.

\bibitem{zhou2018deep}
Y.~Zhou, M.~Han, L.~Liu, J.~S. He, and Y.~Wang, ``Deep learning approach for
  cyberattack detection,'' in \emph{Proceddings of 37th IEEE International
  Workshop on Computer Communications (INFOCOM)}, 2018, pp. 262--267.

\bibitem{fredrikson2014privacy}
M.~Fredrikson, E.~Lantz, S.~Jha, S.~Lin, D.~Page, and T.~Ristenpart, ``Privacy
  in pharmacogenetics: An end-to-end case study of personalized warfarin
  dosing,'' in \emph{Proceedings of 23rd USENIX Security Symposium (USENIX
  Security)}, 2014, pp. 17--32.

\bibitem{dwork2006calibrating}
C.~Dwork, F.~McSherry, K.~Nissim, and A.~Smith, ``Calibrating noise to
  sensitivity in private data analysis,'' in \emph{Theory of cryptography
  conference}.\hskip 1em plus 0.5em minus 0.4em\relax Springer, 2006, pp.
  265--284.

\bibitem{bost2015machine}
R.~Bost, R.~A. Popa, S.~Tu, and S.~Goldwasser, ``Machine learning
  classification over encrypted data.'' in \emph{Proceedings of 22nd Annual
  Network \& Distributed System Security Symposium (NDSS)}, vol. 4324, 2015, p.
  4325.

\bibitem{zhang2018privacy}
T.~Zhang, ``Privacy-preserving machine learning through data obfuscation,''
  \emph{arXiv preprint arXiv:1807.01860}, 2018.

\bibitem{kairouz2015secure}
P.~Kairouz, S.~Oh, and P.~Viswanath, ``Secure multi-party differential
  privacy,'' in \emph{29th Conference on Neural Information Processing Systems
  (NeurIPS)}, 2015, pp. 1999--2007.

\bibitem{shokri2015privacy}
R.~Shokri and V.~Shmatikov, ``Privacy-preserving deep learning,'' in
  \emph{Proceedings of 22nd ACM Conference on Computer and Communications
  Security (CCS)}, 2015, pp. 1310--1321.

\bibitem{abadi2016deep}
M.~Abadi, A.~Chu, I.~Goodfellow, H.~B. McMahan, I.~Mironov, K.~Talwar, and
  L.~Zhang, ``Deep learning with differential privacy,'' in \emph{Proceedings
  of 23rd ACM Conference on Computer and Communications Security (CCS)}, 2016,
  pp. 308--318.

\bibitem{lee2018concentrated}
J.~Lee and D.~Kifer, ``Concentrated differentially private gradient descent
  with adaptive per-iteration privacy budget,'' in \emph{Proceedings of 24th
  ACM SIGKDD Conference on Knowledge Discovery and Data Mining (KDD)}, 2018,
  pp. 1656--1665.

\bibitem{koskela2018learning}
A.~Koskela and A.~Honkela, ``Learning rate adaptation for differentially
  private stochastic gradient descent,'' \emph{arXiv preprint
  arXiv:1809.03832}, 2018.

\bibitem{xiang2019differentially}
L.~Xiang, J.~Yang, and B.~Li, ``Differentially-private deep learning from an
  optimization perspective,'' in \emph{Proceedings of 38th Annual IEEE
  Conference on Computer Communications (INFOCOM)}, 2019, pp. 559--567.

\bibitem{mironov2017renyi}
I.~Mironov, ``R\'{e}nyi differential privacy,'' in \emph{Proceedings of 30th
  IEEE Computer Security Foundations Symposium (CSF)}, 2017, pp. 263--275.

\bibitem{wang2018subsampled}
Y.-X. Wang, B.~Balle, and S.~Kasiviswanathan, ``Subsampled r\'enyi differential
  privacy and analytical moments accountant,'' \emph{arXiv preprint
  arXiv:1808.00087}, 2018.

\bibitem{lecun1998gradient}
Y.~LeCun, L.~Bottou, Y.~Bengio, and P.~Haffner, ``Gradient-based learning
  applied to document recognition,'' \emph{Proceedings of the IEEE}, vol.~86,
  no.~11, pp. 2278--2324, 1998.

\bibitem{krizhevsky2009learning}
A.~Krizhevsky and G.~Hinton, ``Learning multiple layers of features from tiny
  images,'' Citeseer, Tech. Rep., 2009.

\bibitem{zhang2018crowdbuy}
L.~Zhang, Y.~Li, X.~Xiao, X.-Y. Li, J.~Wang, A.~Zhou, and Q.~Li, ``Crowdbuy:
  Privacy-friendly image dataset purchasing via crowdsourcing,'' in
  \emph{Proceedings of 37th IEEE Annual Conference on Computer Communications
  (INFOCOM)}, 2018, pp. 2735--2743.

\bibitem{jin2019if}
W.~Jin, M.~Xiao, M.~Li, and L.~Guo, ``If you do not care about it, sell it:
  Trading location privacy in mobile crowd sensing,'' in \emph{Proceedings of
  38th IEEE Annual Conference on Computer Communications (INFOCOM)}, 2019, pp.
  1045--1053.

\bibitem{niu2019making}
C.~Niu, Z.~Zheng, S.~Tang, X.~Gao, and F.~Wu, ``Making big money from small
  sensors: Trading time-series data under pufferfish privacy,'' in
  \emph{Proceedings of 38th IEEE Annual Conference on Computer Communications
  (CCS)}.\hskip 1em plus 0.5em minus 0.4em\relax IEEE, 2019, pp. 568--576.

\bibitem{li2019pedss}
H.~Li, Y.~Yang, Y.~Dou, J.-M.~J. Park, and K.~Ren, ``Pedss: Privacy enhanced
  and database-driven dynamic spectrum sharing,'' in \emph{Proceedings of 38th
  IEEE Annual Conference on Computer Communications (INFOCOM)}.\hskip 1em plus
  0.5em minus 0.4em\relax IEEE, 2019, pp. 1477--1485.

\bibitem{phan2016differential}
N.~Phan, Y.~Wang, X.~Wu, and D.~Dou, ``Differential privacy preservation for
  deep auto-encoders: {An} application of human behavior prediction,'' in
  \emph{Proceedings of 30th AAAI Conference on Artifacial Intelligence (AAAI)},
  vol.~16, 2016, pp. 1309--1316.

\bibitem{papadimitriou2017dstress}
A.~Papadimitriou, A.~Narayan, and A.~Haeberlen, ``{DStress}: {E}fficient
  differentially private computations on distributed data.'' in
  \emph{Proceedings of the 12th European Conference on Computer Systems
  (EuroSys)}, 2017, pp. 560--574.

\bibitem{papernot2016semi}
N.~Papernot, M.~Abadi, {\'U}.~Erlingsson, I.~Goodfellow, and K.~Talwar,
  ``Semi-supervised knowledge transfer for deep learning from private training
  data,'' in \emph{Proceedings of 5th International Conference on Learning
  Representations (ICLR)}, 2017.

\bibitem{phan2017adaptive}
N.~Phan, X.~Wu, H.~Hu, and D.~Dou, ``Adaptive laplace mechanism: differential
  privacy preservation in deep learning,'' in \emph{Proceedings of the IEEE
  International Conference on Data Mining (ICDM)}, 2017, pp. 385--394.

\bibitem{papernot2018scalable}
N.~Papernot, S.~Song, I.~Mironov, A.~Raghunathan, K.~Talwar, and
  {\'U}.~Erlingsson, ``Scalable private learning with {PATE},'' in
  \emph{Proceedings of 6th International Conference on Learning Representations
  (ICLR)}, 2018.

\bibitem{collet2018boosting}
S.~Collet, R.~Dadashi, Z.~N. Karam, C.~Liu, P.~Sobhani, Y.~Vahlis, and J.~C.
  Zhang, ``Boosting model performance through differentially private model
  aggregation,'' \emph{arXiv preprint arXiv:1811.04911}, 2018.

\bibitem{wang2018geographic}
L.~Wang, G.~Qin, D.~Yang, X.~Han, and X.~Ma, ``Geographic differential privacy
  for mobile crowd coverage maximization,'' in \emph{Proceedings of 32nd AAAI
  Conference on Artificial Intelligence (AAAI)}, 2018.

\bibitem{mao2018privacy}
Y.~Mao, S.~Yi, Q.~Li, J.~Feng, F.~Xu, and S.~Zhong, ``A privacy-preserving deep
  learning approach for face recognition with edge computing,'' in \emph{USENIX
  Workshop on Hot Topics in Edge Computing (HotEdge)}, 2018.

\bibitem{balle2018improving}
B.~Balle and Y.-X. Wang, ``Improving the {Gaussian} mechanism for differential
  privacy: Analytical calibration and optimal denoising,'' \emph{arXiv preprint
  arXiv:1805.06530}, 2018.

\bibitem{wiki:Test_functions_for_optimization}
``{Test functions for optimization} --- {W}ikipedia{,} the free encyclopedia,''
  \url{http://en.wikipedia.org/w/index.php?title=Test_functions_for_optimization}.

\bibitem{goodfellow2015efficient}
I.~Goodfellow, ``Efficient per-example gradient computations,'' \emph{arXiv
  preprint arXiv:1510.01799}, 2015.

\bibitem{dwork2014algorithmic}
C.~Dwork, A.~Roth \emph{et~al.}, ``The algorithmic foundations of differential
  privacy,'' \emph{Foundations and Trends{\textregistered} in Theoretical
  Computer Science}, vol.~9, no. 3--4, pp. 211--407, 2014.

\bibitem{abadi2016tensorflow}
M.~Abadi, P.~Barham, J.~Chen, Z.~Chen, A.~Davis, J.~Dean, M.~Devin,
  S.~Ghemawat, G.~Irving, M.~Isard \emph{et~al.}, ``Tensorflow: {A} system for
  large-scale machine learning,'' in \emph{Proceedings of 12th USENIX Symposium
  on Operating Systems Design and Implementation (OSDI)}, vol.~16, 2016, pp.
  265--283.

\bibitem{dwork2014analyze}
C.~Dwork, K.~Talwar, A.~Thakurta, and L.~Zhang, ``Analyze gauss: optimal bounds
  for privacy-preserving principal component analysis,'' in \emph{Proceedings
  of 46th Annual ACM Symposium on Theory of Computing (STOC)}, 2014, pp.
  11--20.

\bibitem{nair2010rectified}
V.~Nair and G.~E. Hinton, ``Rectified linear units improve restricted boltzmann
  machines,'' in \emph{Proceedings of 27th International Conference on Machine
  Learning (ICML)}, 2010, pp. 807--814.

\bibitem{krizhevsky2012imagenet}
A.~Krizhevsky, I.~Sutskever, and G.~E. Hinton, ``Imagenet classification with
  deep convolutional neural networks,'' in \emph{Proceedings of 26th Conference
  on Neural Information Processing Systems (NeurIPS)}, 2012, pp. 1097--1105.

\bibitem{jarrett2009best}
K.~Jarrett, K.~Kavukcuoglu, Y.~LeCun \emph{et~al.}, ``What is the best
  multi-stage architecture for object recognition?'' in \emph{Proceedings of
  IEEE 12th International Conference on Computer Vision (ICCV)}, 2009, pp.
  2146--2153.

\end{thebibliography}

\end{document}